\definecolor{darkred}{RGB}{220,50,0}
\definecolor{lightblue}{rgb}{.80,.85,1}
\definecolor{darkgreen}{RGB}{0,100,0}
\definecolor{firebrick}{RGB}{178,34,34}
\definecolor{salmon}{RGB}{250,128,114}
\definecolor{turquoise}{RGB}{0,128,114}
\definecolor{turquoise2}{RGB}{0,180,140}
\definecolor{darkorchid}{rgb}{0.60,0.20,0.80}
\definecolor{pink}{rgb}{1.,0.,0.6}
\definecolor{pink2}{rgb}{1.,0.,0.8}
\theoremstyle{plain}
\newtheorem{assumption}[theorem]{\textbf{Assumption}}
\DeclareMathOperator{\ax}{{ax}}
\newcommand\R{\mathbb{R}}
\newcommand{\M}{\mathcal{M}}
\newcommand{\Su}{{\mathcal{S}}}
\newcommand{\lfs}{\operatorname{lfs}}
\newcommand{\Tan}{\operatorname{Tan}}
\newcommand{\Nor}{\operatorname{Nor}}
\newcommand{\BP}{\operatorname{BP}}
\newcommand{\UBP}{\operatorname{UBP}}
\newcommand{\rch}{\operatorname{rch}}
\newcommand{\bigO}{\mathcal{O}} 
\newcommand{\Lip}{\operatorname{Lip}}
\newcommand\abs[1]{\left\lvert#1\right\rvert}
\newcommand\norm[1]{\left\lVert #1 \right\rVert}
\title{
	The medial axis of any closed bounded set is Lipschitz stable with respect to the Hausdorff distance under ambient diffeomorphisms
}
\titlerunning{The medial axis of any closed set is Lipschitz stable
}
\author{Hana Dal Poz Kou\v{r}imsk\'a}{IST Austria \\{[Klosterneuburg, Austria]}}{hana.kourimska@ist.ac.at}{https://orcid.org/0000-0001-7841-0091}{}
\author{Andr{\'e} Lieutier}{No affiliation}{andre.lieutier@gmail.com }{}{}
\author{
	Mathijs Wintraecken}{Inria Sophia Antipolis, Universit{\'e} C{\^o}te d'Azur\\{[Sophia Antipolis, France]}  }{m.h.m.j.wintraecken@gmail.com}{https://orcid.org/0000-0002-7472-2220}{Supported by the European Union's Horizon 2020 research and innovation programme under the Marie Sk{\l}odowska-Curie grant agreement No. 754411, the Austrian science fund (FWF) grant No. M-3073, and the welcome package from IDEX of the Universit{\'e} C{\^o} d'Azur. }
\authorrunning{
} 
\keywords{Medial axis, Hausdorff distance, Lipschitz continuity}
\begin{document}
	\maketitle
	
	\begin{abstract} We prove that the medial axis of closed sets is Hausdorff stable in the following sense: 
		Let $\Su \subseteq \mathbb{R}^d$ be a fixed closed set that contains a bounding sphere.  That is, the bounding sphere is part of the set $\Su$. Consider the space of $C^{1,1}$~diffeomorphisms of~$\mathbb{R}^d$ to itself, which keep the bounding sphere invariant. 
		The map from this space of diffeomorphisms (endowed with a Banach norm) to the space of closed subsets of $\R^d$ (endowed with the Hausdorff distance), mapping a diffeomorphism $F$ to the closure of the medial axis of $F(\Su)$, is Lipschitz.
		This extends a previous stability result of Chazal and Soufflet on the stability of the medial axis of $C^2$~manifolds under $C^2$ ambient diffeomorphisms.  
	\end{abstract}

	\section{Introduction}
	
	
	In \cite{Federer}, Federer introduced the \emph{reach} of a (closed) set $\Su \subset \mathbb{R}^d$ as the infimum {over all points in $\Su$} of the distance from these points 
	to the \emph{medial axis} $\ax(\Su)$, the set of points in $\mathbb{R}^d$ for which the closest point in $\Su$ is not unique. 
		Federer also introduced the reach at a point $p\in\Su$ to be the distance from $p$ to the medial axis of $\Su$. We now call this quantity the \emph{local feature size} \cite{Amenta1999} and denote it by $\lfs(p)$. 
		

		Federer proved that the reach is stable under $C^{1,1}$~diffeomorphisms of the ambient space. Here, a $C^{1,1}$~map is a $C^1$~map whose derivative is Lipschitz, and a $C^{1,1}$~diffeomorphism is a $C^{1,1}$~bijective map whose inverse is also $C^{1,1}$. 
		Chazal and Soufflet~\cite{Chazal2004} proved that the medial axis is stable with respect to the Hausdorff distance under ambient diffeomorphisms, but under stronger assumptions than the work of Federer, namely assuming that $\Su$ is a $C^2$~manifold and the distortion is a $C^2$~diffeomorphism of the ambient space. 
		Chazal and Soufflet based their work on earlier results by Blaschke~\cite{Blaschke1916}, which were not as strong as Federer's.

		In this paper we extend the stability result of the medial axis. More concretely, we generalize the result of Chazal and Soufflet \cite{Chazal2004} to arbitrary closed sets and $C^{1,1}$~diffeomorphisms of the ambient space; we show that the Hausdorff distance between the medial axes of the closed set and its image is bounded in terms of Lipschitz constants stemming from the diffeomorphism of the ambient space. 
		{To be more concrete, we show (see Theorem \ref{ambientStabilityWithLipNorm}) that $d_H (\ax(\Su), \ax ( F(\Su ) )) 
			{= \bigO\left( r^2 \varepsilon \right) }$, where $r$ is the radius of the bounding sphere and $\varepsilon$ bounds how close the diffeomorphism is to the identity. }
		Our result follows from the work of Federer \cite{Federer} and in fact shortens the proof in \cite{Chazal2004} significantly.

		Our bounds on the Hausdorff distance say nothing about the topology of the medial axis, which is known to be highly unstable (see e.g. \cite{Attali2009}), although it preserves the homotopy type (see \cite{LIEUTIERhomotopytype}). 
		
		\begin{figure}[h!]
			\centering
			\includegraphics[width=.65\textwidth]{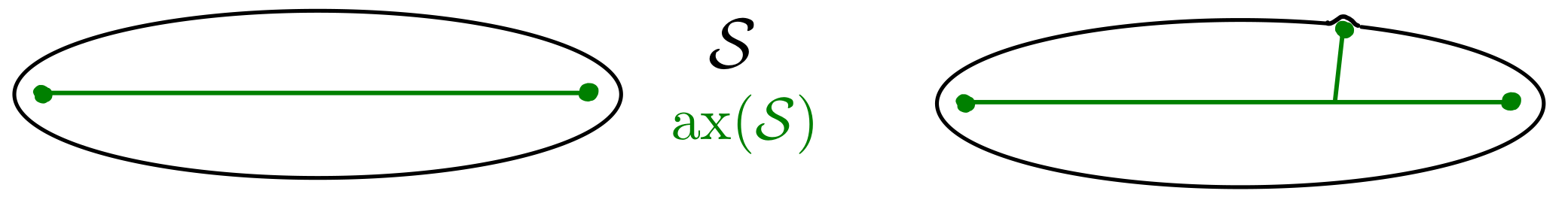}
			\caption{\small {Small (in terms of the Hausdorff distance) bumps on an even smooth curve or surface can create large new branches of the medial axis. However, these bumps in are not small in the $C^{2}$ or $C^{1,1}$ sense (roughly speaking the second order derivatives will be large) and therefore cannot be the consequence of an ambient diffeomorphism whose distortion is small in a $C^{2}$ or $C^{1,1}$ sense.}
			}
			\label{fig:bump}
		\end{figure}
		
		\subparagraph{Contribution and related work } 
	Our work differs from the majority of the literature in three essential ways:
	
	Firstly, we make no assumptions on the set we consider apart from that it is closed. 
	The stability 
	of the medial axis of (piecewise) smooth manifolds 
	has been the object of intense study, see for example \cite{Chazal2004, mather1983distance, thom1972, van2007maxwell, wolter1993cut, damon2021rigidity, Damon2008, damon2017medial, gasparovic2012blum, Wall1977}. 
	However, the manifold assumption is impossible to achieve in many applications --- such as in the context of astrophysics, one of the main motivations of this paper.
	
	Secondly, we achieve stability without pruning the medial axis. This contrasts with a large body of work, such as \cite{attali1997computing, cl2005lambda, damon2021rigidity, lieutier2023hausdorff}. 
	Not having to prune the medial axis is a significant advantage. On the downside, we limit the changes of the considered set to those induced by ambient diffeomorphisms. Nevertheless, given the standard examples of the instability of the medial axis --- see for example \cite{Attali2009} --- we believe these limitations are near to the weakest assumptions necessary for Hausdorff stability. 
	Within the context of ambient homeomorphisms, the results we obtain are close to optimal, as we specify in Remark \ref{rem:Opt}.  
	
	
	Thirdly, our results hold for sets in arbitrary dimensions and are not sensitive to the dimension of the set itself. A large part of the related work only investigates sets of low dimensions or codimension one manifolds, although there are some notable exceptions such as \cite{Wall1977}, see also \cite{damon2017medial}, and \cite{cl2005lambda,lieutier2023hausdorff}. 
	
	{\subparagraph{Motivation} 
		
		The medial axis has many real world applications --- among others, in robot motion planning \cite{latombe2012robot}, triangulation algorithms \cite{amenta2001power}, graphics \cite{tagliasacchi20163d}, vision \cite{hung2012medial, lescroart2013cortical}, and shape recognition, segmentation, and learning \cite{chambers2018medial,ho1986shape, SHAKED1998156, demir2019skelneton, Erin2016, trinh2011skeleton, hu2019mat,rezanejad2019scene}. See also the overviews \cite{tagliasacchi20163d, saha2016survey}.
		%
		The reach --- the distance between a set and its medial axis --- 
		is a central concept in manifold learning \cite{EddieManSTOC, aamari2018stability, fefferman2018fitting, fefferman2019fitting, fefferman2020reconstruction, sober2020manifold}.

		The motivation of this paper 
		is twofold:
		Firstly, we tackle the following challenge from
		the processing of images collected with optical devices which use lenses  --- such as cameras or telescopes:
		A shape extracted from such an image may be imprecise due to the imperfection of the lenses.

		Our result implies that the medial axis of such a shape is stable under these  imperfections. As a consequence, the outcome of any shape recognition or shape segmentation algorithm based on the medial axis will be stable.
		
		In addition to the disciplines listed at the beginning of this paragraph, the stability of the medial axis is sought after in astrophysics, in particular for shape analysis and automated shape identification in observational astronomy.
		%
		%
		Observational astronomers are interested in reconstructing objects like stars or galaxies, and their place in the universe from data gathered by telescopes. 
		They can deduce the distance from the object to the observer 
		thanks to so-called standard candles or red shift \cite{Fernie_1969, peebles1993principles, BigIdeasCosmology}. However, the image gets distorted due to optical effects --- either through gravitational lensing (\cite{Bartelmann2010}) or lensing inside the telescope itself (\cite{tang2017precision}).

		Such a distortion can be modeled as a diffeomorphism of the ambient space. At the same time, this problem cannot be tackled using the result by Chazal and Soufflet \cite{Chazal2004}, since the observed objects might not be smooth --- for example due to interactions with shock waves or jets. 
		In addition, with our method astrophysicists can not only reconstruct objects in space (3D), but also in spacetime (4D).

		{Another context where the removal of the assumption that the set is a (smooth) manifold is important is biology, as branching structures are ubiquitous in nature. In fact, it was questions from biology that motivated the `introduction'\footnote{The medial axis was studied before by Erd\H{o}s \cite{erdos1945some,erdos1946} in a different context. } of the medial axis by Blum~\cite{blum1967transformation}. 
		}

	}
	
	The second motivation is more formal in nature: The stability of the medial axis is instrumental in establishing its computability. 
	Indeed, when proving properties of algorithms based on the medial axis, authors generally assume the real RAM model.\footnote{The real RAM model is a standard, albeit non-realistic, assumption in Computational Geometry. It assumes one can calculate precisely with real numbers, instead of using $0$s and $1$s (which is the usual assumption in computer science). } 
	However, as was recently argued in \cite{lieutier2023hausdorff}, the medial axis needs to be stable in order to be computable in more realistic models of computation. 
	
	There is a more practical component to this formal question: It is not a priori clear if using possibly noisy real world data or the output of other computer programs as input for these algorithms yields answers that are close to the ground truth. 
	To be able to prove that the output is correct, we need (numerical) stability of the medial axis.  

	\subparagraph{Outline} 
	After revisiting preliminaries and known results in Section~\ref{sec:preliminaries},
	we state the main stability result in Section~\ref{AmbientDiff}. In Section~\ref{sec:BanachHolder} we reformulate this result in terms of norms on Banach spaces. This also exhibits the fact that the stability {of the medial axis} is 
	Lipschitz in the following sense: We think of the set $\Su$ as fixed and consider the map from the space of diffeomorphisms (endowed with a norm which makes it a Banach space) to {the space of closed subsets of $\R^d$} (endowed with the Hausdorff distance), {mapping each diffeomorphism $F:\R^d\to\R^d$ to the closure of the medial axis of $F(\Su)$}. The Lipschitz constant then only depends on the diameter of the bounding sphere of the set $\Su$.
		We conclude with some future work. 
	
	\section{Preliminaries: Sets of positive reach and the closest point projection}\label{sec:preliminaries}
	In this section we recall some definitions and results concerning {the medial axis and} sets of positive reach. 
	Essentially, we need three ingredients from the literature to prove our main theorem: the notions related to the closest point projection, the properties of the generalized normal and tangent spaces, and Federer's result on the stability of the reach under ambient diffeomorphisms \cite{Federer}.

	We write $d(\cdot,\cdot)$ for the Euclidean distance between two points, and the distance between a point and a set. That is, for any closed set $\Su$ and point $p$,
	\[
	d(p, \Su) = \inf_{q\in \Su} d(p,q).
	\]
	We denote the Hausdorff distance between two sets $A,B\subseteq \R^d$ by $d_H(A,B)$:
	\[
	d_H(A,B) = \max\left\{ \sup_{a\in A}d(a, B), \sup_{b\in B}d(b, A) \right\}.
	\]
	{We write $B(c,r)$, resp. $S(c,r)$, to denote balls, resp. spheres, with centre $c$ and radius $r$.} Lastly, $\abs{\cdot}$ denotes the Euclidean norm, and $\norm{\cdot}$ an operator norm.
	
	\subparagraph{The closest point projection and related notions}
	The projection of points in the ambient space $\R^d$ to the {(set of)} closest point(s) of the set $\Su\subseteq \R^d$ is denoted by $\pi_{\Su}$, and illustrated in Figure~\ref{fig:pi_S}.
	
	\begin{figure}[h!]
		\centering
		\includegraphics[width=.55\textwidth]{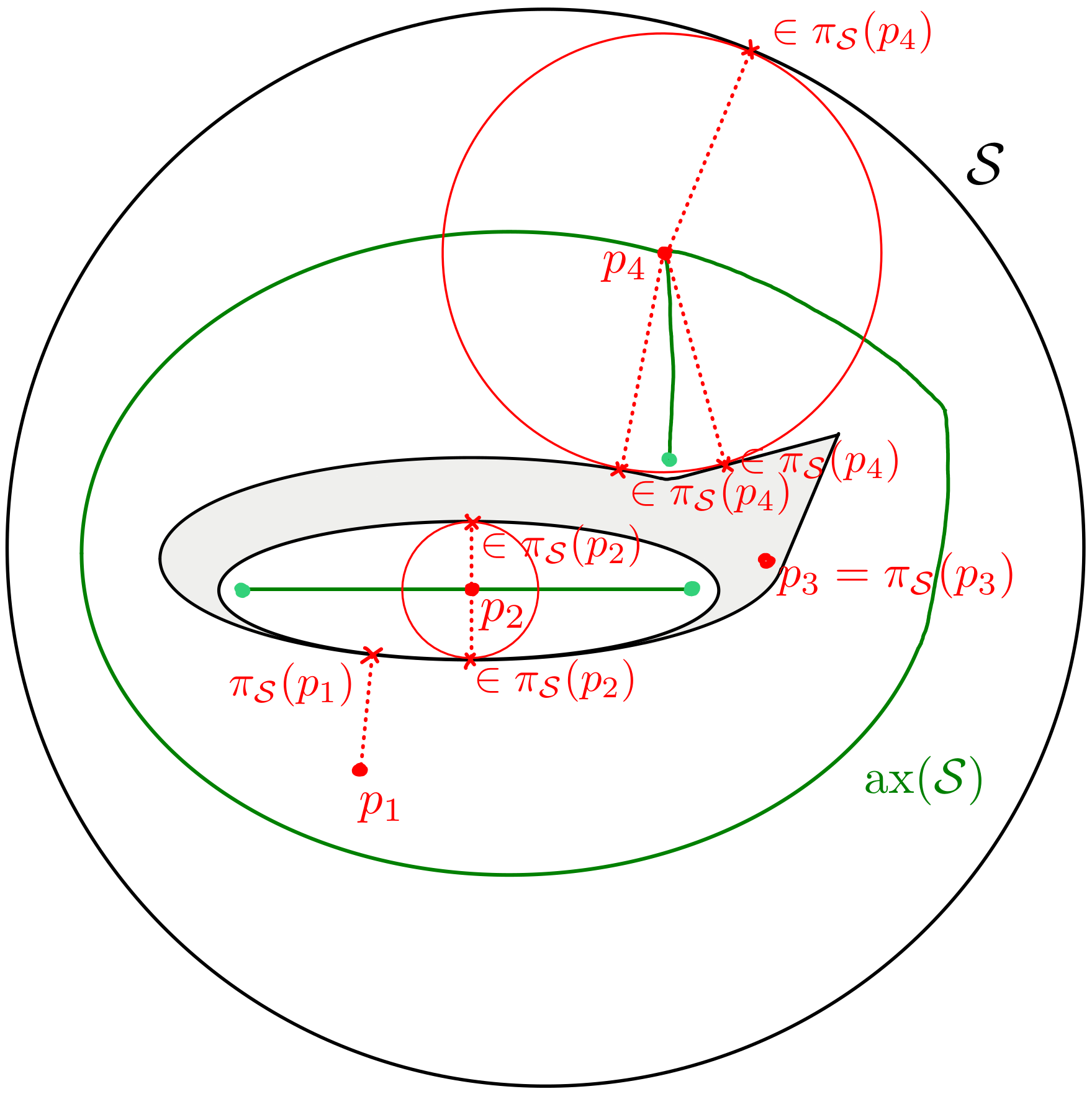}
		\caption{\small The closest point projection to the set $\Su$ of four points in $\R^2$. When a point lies on the medial axis $\ax(\Su)$, the closest point projection consists of more points.  }
		\label{fig:pi_S}
	\end{figure}
	
	The \emph{medial axis} of $\Su$ is the set of all points $p\in\R^d$ where the set $\pi_{\Su}(p)$ consists of more than one point:
	\[
	\ax(\Su) = \left\{ p\in\R^d \mid \#\pi_{\Su}(p)>1 \right\}.
	\]
	Here, $\#\pi_{\Su}(p)$ denotes the cardinality of the set $\pi_{\Su}(p)$.

	For a point $p\in \Su$, the \emph{local feature size of $p$} is the distance from $p$ to the medial axis of the set $\Su$:
	\[
	\lfs(p) = d(p, \ax(\Su)).
	\]
	Finally, the \emph{reach} of the set $\Su$ is the infimum of the local feature size over all its points:
	\[
	\rch(\Su) = \inf_{p\in \Su} \lfs(p) = \inf_{p\in \Su}d(p, \ax(\Su)).
	\]

	\textbf{\emph{Throughout this paper we assume that $\Su\subseteq \R^d$ is a closed set.}}
	We shall further assume that the set $\Su$ as well as its medial axis are bounded, and that the bounding sphere of $\Su$ is contained in $\Su$ itself. 
	More specifically, we assume that there exists a closed ball $B$ of positive radius such that $\Su \subseteq B$, and  $\partial B \subseteq \Su$. We call $\partial B$ the bounding sphere of $\Su$. 

	The addition of the bounding sphere $\partial B$ to the set $\Su$ is necessary to obtain the desired bound on the Hausdorff distance between the two medial axes of the set $\Su$ and its image under the ambient diffeomorphism.
	Indeed, consider the following example, illustrated in Figure~\ref{fig:infinite_Haus_distance}.

	Let the set $\Su$ consist of two points in the plane, $\Su= \{p ,q\} \subseteq \mathbb{R}^2$.
	The medial axis of $\Su$ is then the bisector line of $p$ and $q$. After a generic perturbation $F$ of $p$ and $q$ --- that is, not a translation and not a perturbation in the direction $\pm (p-q)$ --- the bisector line $\ax(F(\Su))$ of the perturbed points intersects the bisector $\ax(\Su)$ of the original pair. The Hausdorff distance between these two non-parallel lines is infinite, and thus unboundable. {If, however, we restrict ourselves to a ball around the origin of size $r$, the Hausdorff distance between the two restricted medial axes is of order $\mathcal{O} (r \theta)$, where $\theta$ is the angle between the vectors $p-q$ and $F(p)- F(q)$.}
	
	\begin{figure}[h!]
		\centering
		\includegraphics[width=.65\textwidth]{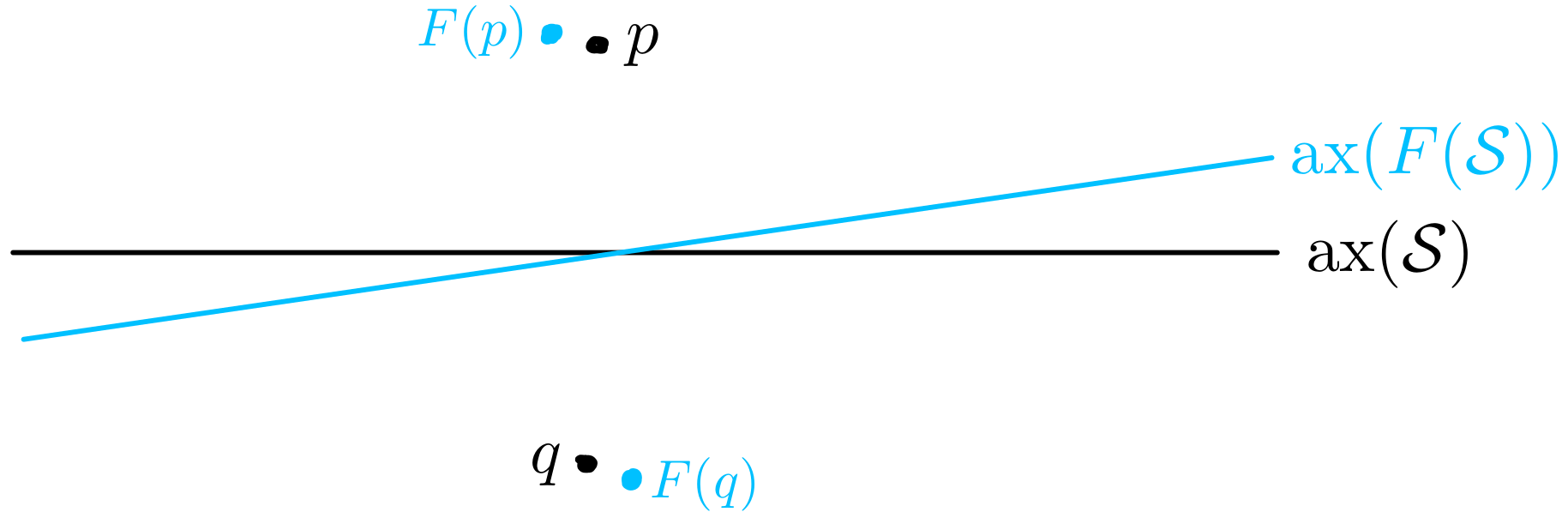}
		\caption{\small In black the set $\Su$ and its medial axis, in light blue the perturbed set and its medial axis. {Since the lines $\ax(\Su)$ and $\ax(F(\Su))$ are non-parallel, the Hausdorff distance between them is infinite. Hence it is impossible to give a bound on the distance between the medial axes without localizing. 
			} 
		}
		\label{fig:infinite_Haus_distance}
	\end{figure}
	
	{
		At the same time, the addition of the bounding sphere $\partial B$ to the considered set $\Su$ is not a restriction. Indeed, 
		\begin{remark} \label{rem:BoundingSphereNoProblem}
			The medial axes of $\Su$ and $\Su \setminus \partial B$ coincide in the interior of the ball $B$ sufficiently far away from its boundary $\partial B$. 
			More precisely:
			\begin{itemize}
				\item Any point $x \in \ax (\Su)$, such that $\pi_\Su(x) \cap \partial B = \emptyset$, lies on the medial axis $\ax(\Su \setminus \partial B)$.
				\item Conversely, if a point $x$ lies on the medial axis $\ax(\Su \setminus \partial B)$, and any (and thus every) point $q\in \pi_{\Su \setminus \partial B} (x) $ satisfies $d(x,q) < d(x, \partial B) $, then $x\in  \ax(\Su)$. 
			\end{itemize}
			Thus, the medial axis is locally stable if the ambient diffeomorphism is close to the identity.\footnote{The bounding sphere does allow one to give a relatively clean mathematical statement, see Section \ref{sec:BanachHolder}.
			} 
		\end{remark} 
	}
	
	A recurring strategy in this article is to start at a point $p$ on the set $\Su$, move away from this point in a `normal' direction, {and see if by projecting using the closest point projection $\pi_\Su$ we get back to $p$}. 
	To this end, we define the \emph{projection range}.
	
	\begin{definition}[Projection range]\label{def:projection_range}
		Let $p \in \Su$ be a point and $v \in \mathbb{R}^d$ a vector. The \emph{projection range} $d(p,v,\pi_\Su)$ in direction $v$ is the maximal distance one can travel from $p$ along $v$ such that the closest point projection yields only the point $p$:
		\[
		d(p,v,\pi_\Su) = \sup \{ \lambda \in \mathbb{R} \mid \pi_\Su (p + \lambda v )= \{p\} \}.
		\]
	\end{definition}
	
	Since $\pi_\Su(p) = \{p\}$, the projection range is canonically non-negative. {Furthermore, the directions for which the range is positive are key to our study, because of the following property:}
	\begin{lemma}[Theorem 4.8 (6) of \cite{Federer}]\label{Fed4.8.6}
		Consider a point $p \in \Su$ and a vector $v \in \mathbb{R}^d$. If
		\begin{align}
			0 <  d(p,v,\pi_\Su) < \infty,
			\nonumber
		\end{align} 
		then $p + d(p,v,\pi_\Su)\cdot v \in  \overline{\ax (\Su)}$, where 
		$\overline{X}$ denotes the closure of $X$.
	\end{lemma}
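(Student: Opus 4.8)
The statement is quoted from \cite{Federer} as Theorem~4.8(6), so the most economical ``proof'' is simply to cite it. Should one want the argument spelled out, I would proceed as follows, writing $\lambda_0:=d(p,v,\pi_\Su)\in\intervalleoo{0}{\infty}$ and $x_0:=p+\lambda_0 v$. First I would observe that $p$ is always a nearest point of $x_0$: for $\lambda\in\intervallefo{0}{\lambda_0}$ we have $\pi_\Su(p+\lambda v)=\{p\}$, so $d(p+\lambda v,\Su)=\lambda\abs v$, and since $x\mapsto d(x,\Su)$ is $1$-Lipschitz, letting $\lambda\uparrow\lambda_0$ gives $d(x_0,\Su)=\lambda_0\abs v=\abs{x_0-p}$, i.e.\ $p\in\pi_\Su(x_0)$. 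This already settles the case $\#\pi_\Su(x_0)>1$, in which $x_0\in\ax(\Su)\subseteq\overline{\ax(\Su)}$.

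It remains to treat the case $\pi_\Su(x_0)=\{p\}$, and here I would argue by contradiction. Assume $x_0\notin\overline{\ax(\Su)}$; as $\overline{\ax(\Su)}$ is closed there is $\varepsilon>0$ with $B(x_0,\varepsilon)\cap\overline{\ax(\Su)}=\emptyset$. On $B(x_0,\varepsilon)$ the projection $\pi_\Su$ is single-valued, and since $B(x_0,\varepsilon)$ lies in the interior of the set of points with a unique nearest point, $\pi_\Su$ restricted to $B(x_0,\varepsilon)$ is continuous (another part of Federer's Theorem~4.8). For $t>0$ small, $p+(\lambda_0+t)v\in B(x_0,\varepsilon)$, hence $\pi_\Su\bigl(p+(\lambda_0+t)v\bigr)=\{q_t\}$ for a single point $q_t$ which, by maximality of $\lambda_0$, differs from $p$; continuity of $\pi_\Su$ then forces $q_t\to p$ as $t\downarrow 0$. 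In other words, the open ball of centre $p+(\lambda_0+t)v$ and radius $(\lambda_0+t)\abs v$ already meets $\Su$, at points converging to $p$, whereas the limiting ball touches $\Su$ only at $p$.

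The final step is to turn this into a contradiction, and this is the point I expect to be the real obstacle. One must rule out the ``tangential catch-up'' scenario in which points of $\Su\setminus\{p\}$ slide along the growing spheres towards $p$ without ever producing a second nearest point inside $B(x_0,\varepsilon)$; simple examples (e.g.\ $\Su$ the subgraph of a $C^1$ function whose curvature is unbounded near $p$) show that at such an endpoint the additional medial-axis point need not lie on the ray $p+\R v$ at all, so an argument over a whole neighbourhood of $x_0$ is unavoidable. The way through is to exploit the finer regularity of sets of positive reach — the first-order differentiability of $d(\cdot,\Su)$ on the unique-nearest-point set and the local Lipschitz behaviour of $\pi_\Su$ there (further parts of Federer's Theorem~4.8) — in order to exhibit a genuine two-nearest-point configuration arbitrarily close to $x_0$, contradicting $B(x_0,\varepsilon)\cap\overline{\ax(\Su)}=\emptyset$. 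Everything before this step is soft; this last step is exactly the content of Federer's Theorem~4.8(6), which is why I would in the end simply cite it.
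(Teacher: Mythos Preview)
The paper does not prove this lemma; it is stated as a citation of Federer's Theorem~4.8(6) and used as a black box. Your proposal does the same, so it matches the paper's treatment exactly.

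Your exploratory sketch is accurate as far as it goes: the reduction to the case $\pi_\Su(x_0)=\{p\}$ via $1$-Lipschitzness of $d(\cdot,\Su)$ is correct, and you rightly isolate the genuine difficulty (your ``tangential catch-up'' scenario). One small correction: you speak of ``the finer regularity of sets of positive reach,'' but in this paper $\Su$ is an arbitrary closed set and may well have reach zero; Federer's Theorem~4.8 is nevertheless stated for arbitrary closed sets, so the citation is still the right move.
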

	
	{ 
		We call these special directions $v$ \emph{back projection vectors}:
		\begin{definition}[Unit back projection vectors]\label{def:UNor}
				
			For a point $p\in\Su$, $\UBP(p,\Su)$ is the set of unit vectors with a positive projection range:
			\[
			\UBP(p,\Su) = \left\{ u\in\R^d\mid \abs{u} = 1 \text{ and } 0< d(p,u,  \pi_\Su  )<\infty \right\}.
			\]
			We further define
			\begin{align*}
				\UBP(\Su) &= \left\{ (p,u)\in\Su\times\R^d \, \middle | \,   u\in \UBP(p,\Su) \right\}, 
				\\
				\BP(\Su) &= \left\{ (p,\lambda u)\in\Su\times\R^d \, \middle | \, (p,u)\in \UBP(\Su), \lambda\geq 0 \right\}.
			\end{align*}
		\end{definition}
		
		\begin{figure}[h!]
			\centering
			\includegraphics[width=.45\textwidth]{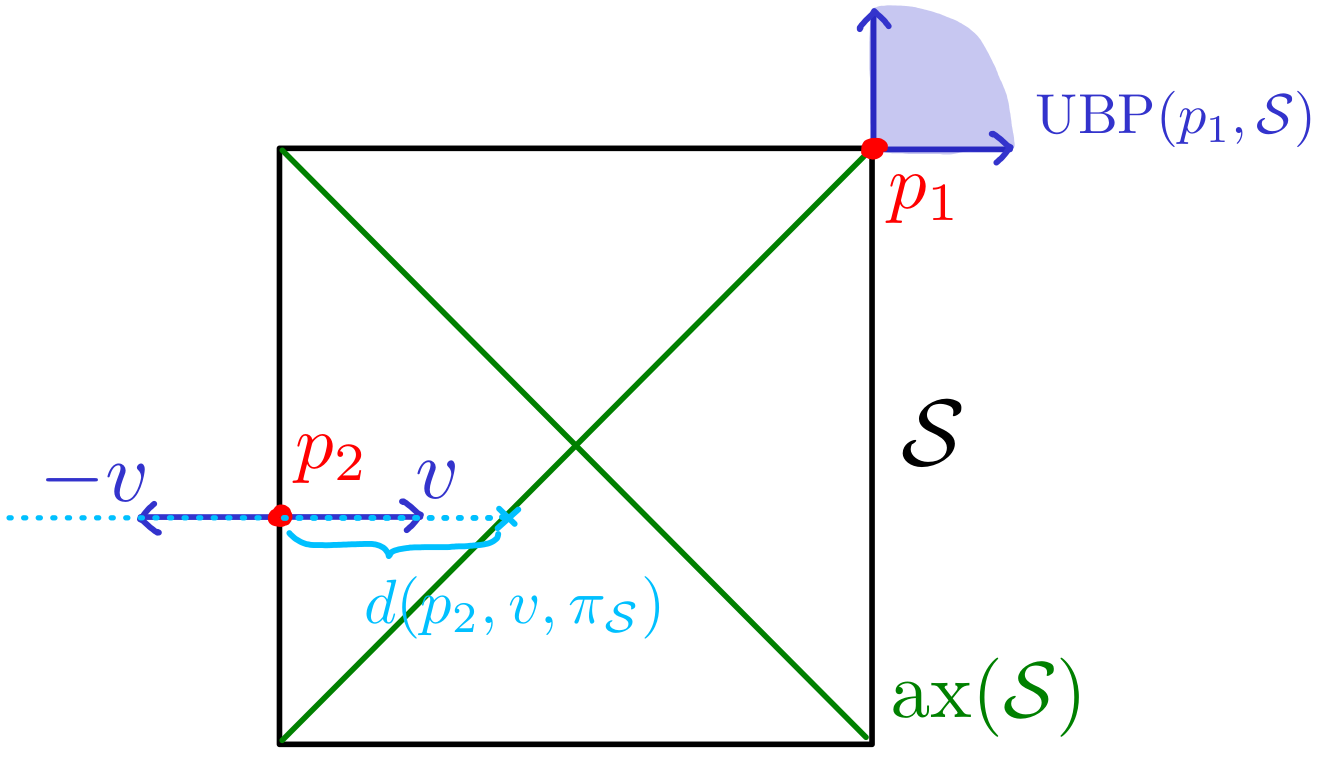}
			\caption{\small {The projection range $d(p_1,w,\pi_\Su)$ equals infinity if $w\in \UBP(p_1,\Su)$ and zero otherwise. For $p_2, d(p_2,-v,\pi_\Su)$ is infinite, $d(p_2,v,\pi_\Su)$ is finite and non-zero, and $d(p_2,w,\pi_\Su) = 0$ for $w\neq \pm v$. Hence, the set of unit projection vectors equals $\UBP(p_2,\Su) = \{v, -v\}$. }
			}
			\label{fig:square}
		\end{figure}
		%
		Thanks to Lemma~\ref{Fed4.8.6},  the following map is well-defined:  
		\begin{equation} 
			\pi_{\ax,\Su}: \UBP(\Su)  \to \overline{\ax ( \Su)},  
			\qquad (p,u) \mapsto p + d(p,u,  \pi_\Su  ) u.
			\label{eq:def_axis_projection}
		\end{equation}
	}
	
	\subparagraph{The generalized tangent and normal space}
	Back projection vectors are intricately related to the generalized tangent and normal spaces. 
	\begin{definition}[Definitions 4.3 and 4.4 of \cite{Federer}] \label{def:4.3and4.4Fed} 
		Let $p\in \Su$. The \emph{generalized tangent space} $\Tan(p,\Su)$ is the set of vectors $u \in \R^d$, such that either $u=0$ or, for every $\varepsilon>0$ there exists a point $q \in  \Su$ with
		\begin{align}
			0<&|q-p|<\varepsilon &\textrm{and} & &\left| \frac{q-p}{|q-p|}- \frac{u}{|u|} \right| < \varepsilon.
			\nonumber 
		\end{align}
		{The \emph{generalized normal space}
			$
			\Nor(p,\Su)
			$
			consists of vectors $v \in \R^d$ such that  $ \langle v, u \rangle \leq 0$ for all $u \in \Tan(p,\Su)$. Vectors contained in the generalized tangent, resp. normal, space are called \emph{tangent}, resp. \emph{normal}, \emph{to $\Su$ at $p$.}}
		
	\end{definition}
	The generalized tangent and normal spaces are illustrated in Figure~\ref{fig:tangent_normal_space}.
	
	\begin{figure}[h!]
		\centering
		\includegraphics[width=.65\textwidth]{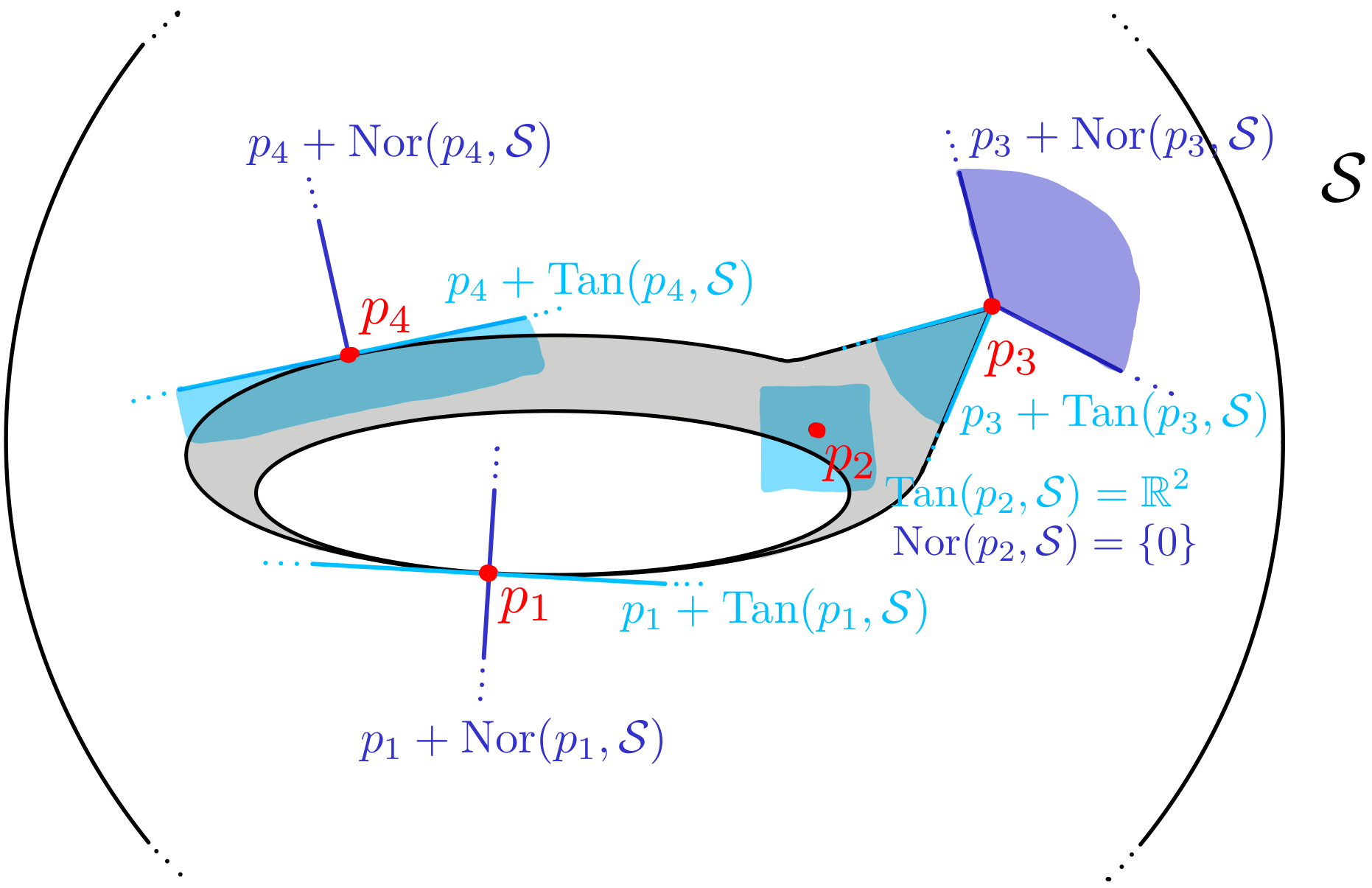}
		\caption{\small The (affine) generalized tangent and normal spaces of four points in the set $\Su\subset \R^2$, in light blue and violet, respectively. }
		\label{fig:tangent_normal_space}
	\end{figure}

	\subparagraph{Stability of the reach under ambient diffeomorphisms}
	Our last ingredient is the following result by Federer. We denote derivative by $D$. 
	\begin{theorem}[Stability of the reach under ambient diffeomorphisms, Theorem 4.19 of \cite{Federer}] \label{Fed4.19}
		{Pick two constants $0<t<\rch(\Su)$ and $s>0$. If the map
			\[ 
			F: \{ x \in \mathbb{R}^{d} \mid d(x, \Su) < s\} \to \mathbb{R}^n
			\]
			is injective and continuously differentiable, and the maps $F$, $F^{-1}$, and $DF$ are Lipschitz continuous with Lipschitz constants  $\Lip(F)$, $\Lip(F^{-1})$, $\Lip(DF)$, respectively, then the reach $\rch( F (\Su))$ of the image of the set $\Su$ under the map $F$ is lower-bounded~by 
			\[
			\rch( F (\Su)) \geq \min \left\{ \frac{s}{ \Lip(F^{-1}) } , \frac{1}{ \left(\frac{\Lip(F)} {t} + \Lip(DF) \right) \left(\Lip(F^{-1})\right)^2} \right \}.
			\]}
	\end{theorem}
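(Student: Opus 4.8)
The plan is to note first that this is Federer's theorem, so one legitimate option is simply to cite it; but since the argument is short and the constant appearing in the statement is exactly what a direct computation produces, I will sketch a self-contained proof resting on Federer's variational characterization of the reach. That characterization (Federer, Theorem 4.18, together with the convexity of $\Tan(a,A)$ for sets of positive reach from his Theorem 4.8) says that for any closed $A\subseteq\R^n$,
\[
\rch(A)=\inf\left\{\frac{|b-a|^2}{2\,d\bigl(b-a,\Tan(a,A)\bigr)}\ :\ a,b\in A,\ a\neq b\right\}.
\]
Hence, writing $a'=F(a)$, $b'=F(b)$, it suffices to produce a constant $\rho$ equal to the claimed minimum such that $d\bigl(b'-a',\Tan(a',F(\Su))\bigr)\leq |b'-a'|^2/(2\rho)$ for all $a,b\in\Su$. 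I would split the pair $(a,b)$ according to whether $|b-a|<2s$ or $|b-a|\geq 2s$.

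\textbf{Small case, $|b-a|<2s$.} First I observe that the whole segment $[a,b]$ lies in the domain $\{d(\cdot,\Su)<s\}$ of $F$, since each of its points is within $|b-a|/2<s$ of $\{a,b\}\subseteq\Su$; this makes the Lipschitz-derivative Taylor estimate $|F(b)-F(a)-DF(a)(b-a)|\leq\tfrac12\Lip(DF)|b-a|^2$ available (integrate $DF$ along the segment). Next, since $F$ is continuously differentiable with $\norm{DF(a)}\leq\Lip(F)$ and bi-Lipschitz, $DF(a)$ is injective, and one checks directly from Definition~\ref{def:4.3and4.4Fed} (using $\tfrac{F(q)-F(a)}{|F(q)-F(a)|}\to\tfrac{DF(a)u}{|DF(a)u|}$ along sequences $\Su\ni q\to a$) that $DF(a)\Tan(a,\Su)\subseteq\Tan(a',F(\Su))$. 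Pick $w\in\Tan(a,\Su)$ realizing $|b-a-w|=d(b-a,\Tan(a,\Su))$, which by the characterization applied to $\Su$ with $\rch(\Su)>t$ is at most $|b-a|^2/(2t)$. Then $DF(a)w\in\Tan(a',F(\Su))$, so
\[
d\bigl(b'-a',\Tan(a',F(\Su))\bigr)\leq |F(b)-F(a)-DF(a)w|\leq \tfrac12\Lip(DF)|b-a|^2+\Lip(F)\,\tfrac{|b-a|^2}{2t}.
\]
Converting lengths by $|b-a|\leq\Lip(F^{-1})|b'-a'|$ turns the right-hand side into $\tfrac12\bigl(\tfrac{\Lip(F)}{t}+\Lip(DF)\bigr)\Lip(F^{-1})^2|b'-a'|^2$, which gives the ratio bound $\rho\geq\bigl(\bigl(\tfrac{\Lip(F)}{t}+\Lip(DF)\bigr)\Lip(F^{-1})^2\bigr)^{-1}$.

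\textbf{Large case, $|b-a|\geq 2s$.} Here $d\bigl(b'-a',\Tan(a',F(\Su))\bigr)\leq|b'-a'|$ trivially, so the ratio is at least $\tfrac12|b'-a'|\geq\tfrac{|b-a|}{2\Lip(F^{-1})}\geq\tfrac{s}{\Lip(F^{-1})}$. Taking the infimum over all pairs $(a,b)$ and combining the two cases yields the stated minimum. The only genuinely non-elementary ingredient is Federer's reach characterization (Theorem 4.18) and the convexity of the tangent cone for sets of positive reach; everything else is a two-line Taylor estimate plus Lipschitz bookkeeping. If one wished to avoid invoking Theorem 4.18, the alternative would be to transport Federer's rolling-ball condition for $\Su$ at scale $t$ forward through $F$ and read off the new admissible radius directly, but that route essentially reproves a version of 4.18, so citing it is cleaner — and, given that the statement is literally Theorem 4.19 of \cite{Federer}, the honest thing in the paper is to cite it and use it as a black box.
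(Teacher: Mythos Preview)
The paper does not prove this theorem at all: it is stated in the preliminaries as Theorem~4.19 of \cite{Federer} and then invoked as a black box in the proof of Theorem~\ref{ambientStability}. Your closing remark --- that the honest thing is simply to cite Federer --- is precisely what the paper does.

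Your sketch is correct and is, in fact, essentially Federer's own proof of his Theorem~4.19: he too derives it from the variational characterization of the reach (his Theorem~4.18), splits according to whether the pair of points is close enough for the segment to stay in the $s$-neighbourhood, applies the first-order Taylor estimate with Lipschitz remainder in the near case, pushes the tangent cone forward by $DF(a)$, and handles the far case by the trivial bound. So there is nothing to compare against the paper --- you have supplied more than the paper contains --- and if one wanted to make the paper self-contained your argument would serve as an appendix.
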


	\section{Stability of the medial axis under ambient diffeomorphisms}\label{AmbientDiff}
	In this section we 
		present 
	the main result of this paper, Theorem \ref{ambientStability}. This theorem extends earlier work by Chazal and Soufflet \cite{Chazal2004}. 
	Its proof relies on Federer's result \cite{Federer} on the stability of the reach, Theorem \ref{Fed4.19}.
	It was a surprise to the authors that no assumption on the set (apart from closedness) needed to be made, and that the techniques used were that {simple and well established; they go back to Federer. 
	In fact, the authors at first envisioned a far more elaborate argument assuming the set had positive $\mu$-reach \cite{chazal2009sampling}.  
	
	To give a more geometrical interpretation to some of the results by Federer we introduce the 
	concept of a weakly tangent sphere and ball, and a maximal empty weakly tangent ball.
	\begin{definition}[Weakly tangent sphere and ball]\label{def:weakly_tangent}
		Let $p \in \Su $. A sphere 
		is called \emph{weakly tangent to $\Su$ at $p$} if it contains the point $p$ and its centre lies in the (translated) generalized normal space $\Nor(p,\Su) +p$. In other words, spheres weakly tangent to $\Su$ at $p$ are spheres with centres $p+v$ and radii $\abs{v}$, for a vector $v\in \Nor(p,\Su)$.
		
		A ball is called \emph{weakly tangent to $\Su$ at $p$} if its boundary sphere is \emph{weakly tangent to $\Su$ at $p$}.
	\end{definition}
	
	\begin{remark}\label{rem:AlternativeDef} 
		Using the definition of $\Nor(p,\Su)$, a weakly tangent ball can also be defined as follows: A ball $B(c,r)$ is weakly tangent at $p$ if and only if its centre $c$ and radius $r$ satisfy
		\[ 
		\left(p+\Tan (\Su,p)\right) \cap B(c,r)= \{p\}.
		\]
	\end{remark}
	
	The following lemma (\ref{Lem:PointsBeyondTheMedialAxisNoLongerProjectOnTheSamePoint}) essentially tells us that a family of weakly tangent balls $\left\{ B(p + \lambda v, \lambda|v|) \right\}_{\lambda\geq 0}$ contains at most one
	which is maximal with respect to inclusion among those whose interior is disjoint from the set $\Su$. Two such families are illustrated in Figure~\ref{fig:nested_balls}.
	\begin{lemma}\label{Lem:PointsBeyondTheMedialAxisNoLongerProjectOnTheSamePoint}
		Let $p\in \Su$ and $v \in \mathbb{R}^d$, and suppose that for some $\lambda>0$ we have $\pi_\Su( p + \lambda v) \neq \{p\}$. Then, for all $\lambda' \geq \lambda$, we have $\pi_\Su( p + \lambda' v) \neq \{p\}$
		{and for all $\lambda' > \lambda$, that $ p \notin \pi_\Su( p + \lambda' v)$.  }
	\end{lemma}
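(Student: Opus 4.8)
The plan is to reduce the statement to a short computation about nested balls. Note first that if $v=0$ then $p+\lambda v = p$ and $\pi_\Su(p) = \{p\}$, so the hypothesis is never satisfied and there is nothing to prove; hence I may assume $\abs{v}>0$. Write $c = p+\lambda v$.

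The first step is to extract a witness from the hypothesis: I claim that $\pi_\Su(c)\neq\{p\}$ forces the existence of a point $q\in\Su$ with $q\neq p$ and $\abs{c-q}\leq\abs{c-p}$. Indeed, if instead $\abs{c-q}>\abs{c-p}$ held for every $q\in\Su\setminus\{p\}$, then --- since $p\in\Su$ --- the point $p$ would be the unique nearest point of $\Su$ to $c$, i.e.\ $\pi_\Su(c)=\{p\}$, a contradiction. I fix such a $q$.

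The second step is the monotonicity computation. Fix $\lambda'>\lambda$ and set $c' = p+\lambda'v$ and $\mu = \lambda'-\lambda>0$, so that $c' = c+\mu v$ and $c'-p = \lambda'v$. Writing $c-q = \lambda v + (p-q)$, the bound $\abs{c-q}^2\leq\lambda^2\abs{v}^2$ rewrites as $2\lambda\langle v,p-q\rangle+\abs{p-q}^2\leq 0$; feeding this (after dividing by $2\lambda$) into the expansion $\abs{c'-q}^2 = \abs{c-q}^2 + 2\mu\langle c-q,v\rangle + \mu^2\abs{v}^2$ will yield
\begin{align*}
\abs{c'-q}^2 \;\leq\; (\lambda')^2\abs{v}^2 - \frac{\mu\,\abs{p-q}^2}{\lambda} \;<\; (\lambda')^2\abs{v}^2 \;=\; \abs{c'-p}^2,
\end{align*}
the strict inequality coming from $\mu>0$, $\lambda>0$ and $\abs{p-q}>0$ (as $q\neq p$). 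Geometrically this is just the statement that the closed balls $\overline{B}(p+\lambda v,\lambda\abs{v})$ form a family that is nested and strictly increasing away from its common boundary point $p$, so one could phrase the whole argument in those terms instead.

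The third step is to conclude. The displayed inequality gives $d(c',\Su)\leq\abs{c'-q}<\abs{c'-p}$, hence $p\notin\pi_\Su(c')$ for every $\lambda'>\lambda$, which is the second assertion. For the first assertion, $\pi_\Su(p+\lambda'v)\neq\{p\}$ follows for $\lambda'>\lambda$ from $p\notin\pi_\Su(p+\lambda'v)$, and for $\lambda'=\lambda$ it is the hypothesis itself. I do not expect a genuine obstacle here: the only mildly delicate point is the first step --- correctly producing the witness $q\neq p$, using that the closest point set of a nonempty closed set is nonempty --- and the rest is elementary algebra.
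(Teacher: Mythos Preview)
Your proof is correct and follows essentially the same approach as the paper: both handle $v=0$ separately, produce a witness $q\in\Su\setminus\{p\}$ in the closed ball $B(p+\lambda v,\lambda\abs{v})$, and then use the nesting of the family $\{B(p+\lambda' v,\lambda'\abs{v})\}_{\lambda'>0}$ to conclude. The only difference is cosmetic: the paper states the nesting (and the strictness away from $p$) geometrically without computation, whereas you spell it out algebraically --- as you yourself note.
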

	\begin{proof}
		We first note that the statement is empty if $v=0$. Let $v\neq 0$, and consider the nested family of balls $B(p + \lambda' v, \lambda'|v|)$, parametrized by $\lambda'>0$.

		Because $\pi_\Su( p + \lambda v) \neq p$, the (closed) ball $B(p + \lambda v, \lambda |v|)$ contains a point $q\in\Su$ other than $p$. Since the balls $B(p + \lambda 'v, \lambda'|v|)$ are nested, the point $q$ lies inside every ball $B(p + \lambda' v, \lambda'|v|)$ with $\lambda' \geq\lambda$. Moreover, $q$ lies in the interior of $B(p + \lambda' v, \lambda'|v|)$ for $ \lambda' > \lambda$. 
		Hence, for every $\lambda' \geq\lambda$, we have that $\pi_\Su( p + \lambda' v) \neq \{p\}$ and for $\lambda' > \lambda$, that $ p \notin \pi_\Su( p + \lambda' v)$. 
	\end{proof}
	
	\begin{figure}[h!]
		\centering
		\includegraphics[width=.6\textwidth]{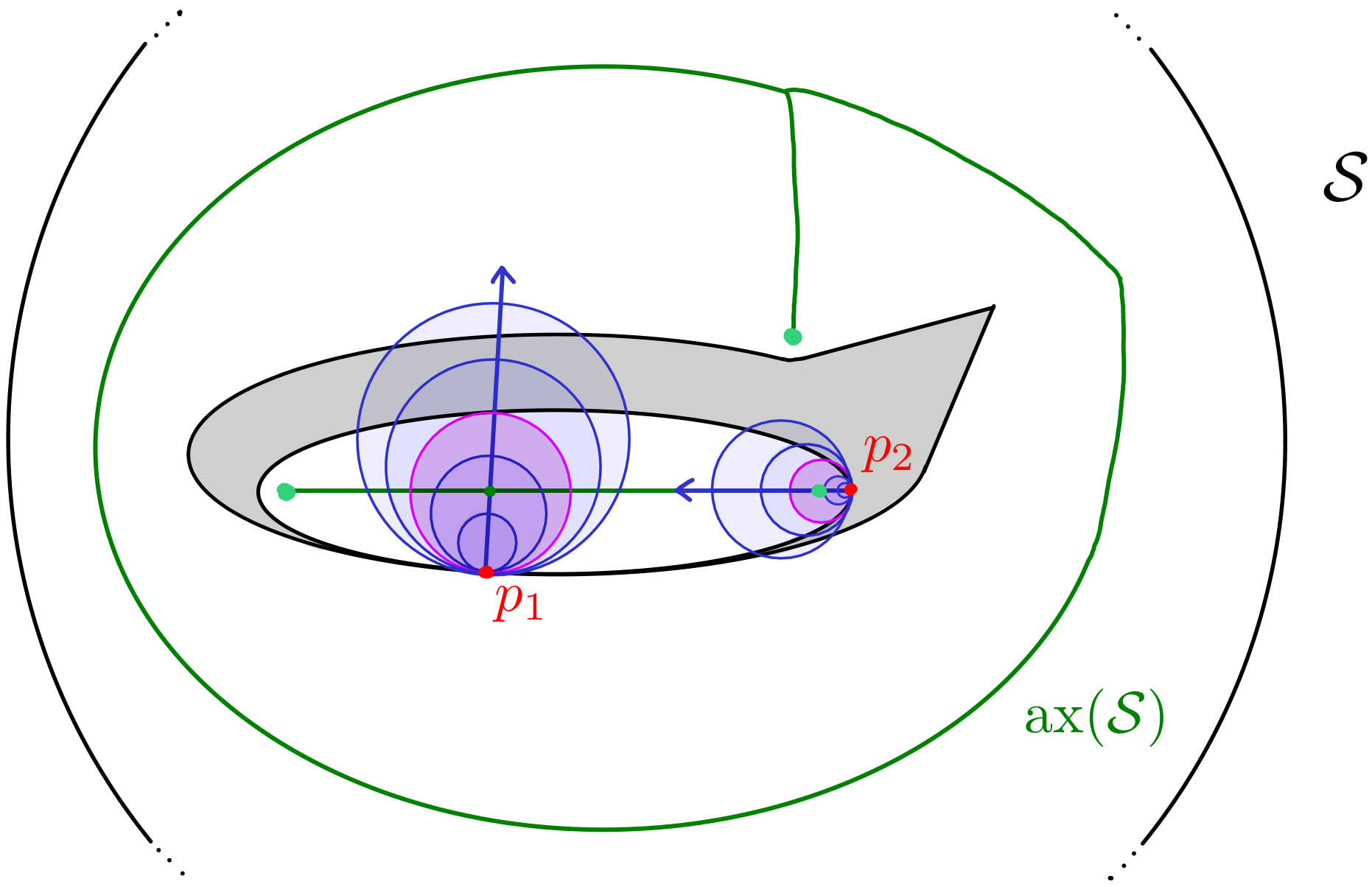}
		\caption{\small Two families of balls weakly tangent to the set $\Su\subset \R^2$ (in blue). Each family contains a unique maximal empty ball (in purple). Notice that the centre of the maximal empty ball weakly tangent at the point $p_1$ lies at the medial axis $\ax(\Su)$, while the centre of the maximal empty ball weakly tangent at the point $p_2$ only lies at its closure, $\overline{\ax ( \Su)}$. }
		\label{fig:nested_balls}
	\end{figure}

	The maximal ball in the family of weakly tangent balls $\left\{ B(p + \lambda v, \lambda|v|) \right\}_{\lambda\geq 0}$ with respect to inclusion (as described in 
	Lemma~\ref{Lem:PointsBeyondTheMedialAxisNoLongerProjectOnTheSamePoint})  
	is called \emph{maximal empty}. 
	{For the purpose of this article, we define maximal empty balls in terms of unit back projection vectors (Definition~\ref{def:UNor}). To see that each maximal empty ball is indeed weakly tangent, we emphasise:
		
		\begin{lemma}\label{lem:back_projection_subset_normal_cone}
			If $(p,v) \in  \BP(\Su)$, then $(p,v) \in \Nor(\Su)$. That is, 
			$
			\BP(\Su) \subseteq  \Nor(\Su)
			$. 
			In particular, for any pair $(p,u) \in { \UBP(\Su)}$ and radius $\lambda\geq 0$, the ball $B(p+\lambda u, \lambda)$ is weakly tangent to~$\Su$.
		\end{lemma}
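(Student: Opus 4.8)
The plan is to unwind the definitions and exhibit, for each back projection vector, an empty open ball that certifies normality via the standard ``exterior ball'' computation. First I would reduce to unit vectors: writing $v = \lambda u$ with $\lambda \ge 0$ and $(p,u) \in \UBP(\Su)$, it suffices to prove $u \in \Nor(p,\Su)$, since $\Nor(p,\Su)$ is a convex cone ($\langle \lambda u, w\rangle = \lambda \langle u, w\rangle \le 0$ whenever $\langle u, w\rangle \le 0$ and $\lambda \ge 0$) and trivially $0 \in \Nor(p,\Su)$.

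Next, since $d(p, u, \pi_\Su) > 0$, I can fix some $\lambda_0$ with $0 < \lambda_0 < d(p,u,\pi_\Su)$, so that $\pi_\Su(p + \lambda_0 u) = \{p\}$. This says precisely that $p$ is the unique point of $\Su$ at distance $\lambda_0 = |(p+\lambda_0 u) - p|$ from the centre $c := p + \lambda_0 u$; equivalently, every $q \in \Su$ satisfies $|q - c| \ge \lambda_0$. Expanding $|q - c|^2 = |q - p|^2 - 2\lambda_0 \langle q - p, u\rangle + \lambda_0^2 \ge \lambda_0^2$ gives the key inequality $\langle q - p, u\rangle \le \frac{|q-p|^2}{2\lambda_0}$ for all $q \in \Su$.

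Then I would take an arbitrary nonzero $w \in \Tan(p,\Su)$ and feed in the sequence from Definition~\ref{def:4.3and4.4Fed}: for each $\varepsilon > 0$ there is $q_\varepsilon \in \Su$ with $0 < |q_\varepsilon - p| < \varepsilon$ and $\bigl| \frac{q_\varepsilon - p}{|q_\varepsilon - p|} - \frac{w}{|w|}\bigr| < \varepsilon$. Dividing the key inequality by $|q_\varepsilon - p| > 0$ yields $\bigl\langle \frac{q_\varepsilon - p}{|q_\varepsilon - p|}, u\bigr\rangle \le \frac{|q_\varepsilon - p|}{2\lambda_0} < \frac{\varepsilon}{2\lambda_0}$, and letting $\varepsilon \to 0$ (using continuity of the inner product) gives $\langle \frac{w}{|w|}, u\rangle \le 0$, hence $\langle w, u\rangle \le 0$. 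Since $w \in \Tan(p,\Su)$ was arbitrary, $u \in \Nor(p,\Su)$, and therefore $v = \lambda u \in \Nor(p,\Su)$, i.e.\ $\BP(\Su) \subseteq \Nor(\Su)$. The ``in particular'' clause is then immediate: for $(p,u) \in \UBP(\Su)$ and $\lambda \ge 0$ we have $\lambda u \in \Nor(p,\Su)$, so by Definition~\ref{def:weakly_tangent} the sphere with centre $p + \lambda u$ and radius $|\lambda u| = \lambda$ --- hence the ball $B(p+\lambda u, \lambda)$ --- is weakly tangent to $\Su$ at $p$.

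As for the main obstacle: there is no deep one. The only points requiring care are the passage to the limit in the definition of $\Tan(p,\Su)$ (which is why the elementary inequality is stated with the correction term $\tfrac{|q-p|^2}{2\lambda_0}$ that vanishes in the limit), and the bookkeeping that lets us pick $\lambda_0$ strictly below the possibly small projection range rather than at it --- everything else is the routine exterior-ball argument. Alternatively one could cite Lemma~\ref{Lem:PointsBeyondTheMedialAxisNoLongerProjectOnTheSamePoint} to see that the open ball $B(c,\lambda_0)$ is disjoint from $\Su$, but the direct computation above is self-contained.
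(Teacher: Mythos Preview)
Your proof is correct and follows the same underlying idea as the paper --- the ``exterior ball'' argument that an empty ball touching $\Su$ at $p$ forces its radial direction to be normal. The paper's version argues by contradiction and appeals to Remark~\ref{rem:AlternativeDef} (showing that a tangent vector $w$ with $\langle v,w\rangle>0$ would put $p+w$ inside the empty ball), whereas you carry out the quadratic expansion $\langle q-p,u\rangle \le |q-p|^2/(2\lambda_0)$ directly and pass to the limit along the sequence from Definition~\ref{def:4.3and4.4Fed}; your version is slightly longer but entirely self-contained and makes explicit the limit step that the paper leaves implicit in its reference to Remark~\ref{rem:AlternativeDef}.
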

		
		
	} 
	
	\begin{remark}
		For general closed sets, the converse of Lemma~\ref{lem:back_projection_subset_normal_cone}, that is, $ \Nor(\Su) \subseteq \BP(\Su)$, is not true. One counter-example is the graph of the function $x\mapsto |x|^{3/2}$ at the origin. However, the inclusion $ \Nor(\Su) \subseteq \BP(\Su)$ holds for sets of positive reach thanks to \cite[Theorem 4.8 (12)]{Federer}, 
		which we recall as Lemma~\ref{Fed4.8.12} in the appendix. 
	\end{remark}

	\begin{definition}[Maximal empty weakly tangent ball]\label{def:MaximalEmptyTangentBall}
		Let $(p,u) \in { \UBP(\Su)}$. A weakly tangent ball $B(p+\lambda u,\lambda)$ is called \emph{maximal empty} {to $\Su$} if $\lambda = d(p,u,\pi_\Su)$, or, equivalently, if $\pi_{\ax, \Su} (p,u) =p+\lambda u$.
	\end{definition}
	
		{(Maximal empty) weakly tangent balls satisfy the following properties. Let $(p,u) \in  \UBP(\Su)$.
			\begin{itemize}
				\item For any radius $0<\lambda \leq d(p,u,\pi_\Su)$, the interior of the ball $B(p+\lambda u,\lambda)$ is disjoint from the set $\Su$.
				This follows directly from Definition~\ref{def:MaximalEmptyTangentBall} and Lemma~\ref{Lem:PointsBeyondTheMedialAxisNoLongerProjectOnTheSamePoint}.  
				
				\item The centres of maximal empty weakly tangent balls lie on the closure of the medial axis of $\Su$. This is due to Lemma \ref{Fed4.8.6} and the definition of the map $\pi_{\ax, \Su}$ (equation \eqref{eq:def_axis_projection}). 
			\end{itemize}
	}
	
		
		The following lemma moreover tells us, that each point on the medial axis is a centre of a maximal empty weakly tangent ball.
		\begin{lemma}[Surjectivity on $\ax (\Su)$]\label{lem:surjectivityMap}
			For any point $x \in \ax (\Su)$ and $p \in \pi_{\Su} (x)$,
			there exists a vector $u \in \UBP(p, \Su)$ such that $\pi_{\ax,\Su} (p,u)=x$. In other words, $B(x,|x-p|)$ is a maximally empty weakly tangent ball. Moreover, we have that 
			\[
			\ax (\Su)\subseteq \pi_{\ax,\Su}\left(\UBP(\Su)\right) \subseteq \overline{\ax (\Su)} .
			\]
		\end{lemma}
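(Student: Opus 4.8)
The plan is to exhibit the unit vector $u = (x-p)/\abs{x-p}$ as the required back projection vector, and to show that the associated maximal empty weakly tangent ball is exactly $B(x,\abs{x-p})$. First note that $x\notin\Su$, since otherwise $\pi_\Su(x)=\{x\}$ would be a singleton, contradicting $x\in\ax(\Su)$; hence $\abs{x-p} = d(x,\Su) > 0$ and $u$ is well defined. Everything then reduces to the claim that $d(p,u,\pi_\Su) = \abs{x-p}$, because by \eqref{eq:def_axis_projection} this immediately gives $\pi_{\ax,\Su}(p,u) = p + \abs{x-p}\,u = x$, and by Definition~\ref{def:MaximalEmptyTangentBall} together with Lemma~\ref{lem:back_projection_subset_normal_cone} it identifies $B(x,\abs{x-p})$ as a maximal empty weakly tangent ball.

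For the lower bound $d(p,u,\pi_\Su)\ge\abs{x-p}$, I would show that every point $y = p+\lambda u$ with $0\le\lambda<\abs{x-p}$ on the half-open segment from $p$ to $x$ satisfies $\pi_\Su(y) = \{p\}$. Since $y$ lies on $[p,x]$, the triangle inequality gives $d(y,\Su)\ge d(x,\Su)-d(x,y) = \lambda = d(y,p)$, so $p\in\pi_\Su(y)$ and $d(y,\Su)=\lambda$. If some $q\in\pi_\Su(y)$ with $q\ne p$ existed, then $d(x,q)\le d(x,y)+d(y,q) = (\abs{x-p}-\lambda)+\lambda = \abs{x-p} = d(x,\Su)$, forcing equality throughout; equality in the triangle inequality pins $y$ to the segment $[x,q]$, and since $y\ne x$ the points $p$ and $q$ then lie on one and the same ray from $x$, both at distance $\abs{x-p}$ from $x$, so $q=p$ --- a contradiction. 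For the upper bound $d(p,u,\pi_\Su)\le\abs{x-p}$, observe that $\pi_\Su(p+\abs{x-p}\,u)=\pi_\Su(x)$ has more than one element and is in particular different from $\{p\}$; Lemma~\ref{Lem:PointsBeyondTheMedialAxisNoLongerProjectOnTheSamePoint} then shows the same for all $\lambda'\ge\abs{x-p}$, so the supremum defining $d(p,u,\pi_\Su)$ is at most $\abs{x-p}$. Hence $d(p,u,\pi_\Su)=\abs{x-p}\in(0,\infty)$, which also yields $u\in\UBP(p,\Su)$.

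The final chain $\ax(\Su)\subseteq\pi_{\ax,\Su}(\UBP(\Su))\subseteq\overline{\ax(\Su)}$ is then immediate: the first inclusion is exactly what was just proved, letting $x$ range over $\ax(\Su)$, and the second is built into the definition of $\pi_{\ax,\Su}$, whose codomain is $\overline{\ax(\Su)}$ by Lemma~\ref{Fed4.8.6}. I expect the only genuine obstacle to be the uniqueness step in the lower bound --- ruling out a second closest point $q\ne p$ for interior points of $[p,x)$. The naive estimates merely give $d(x,q)\le d(x,\Su)$; the decisive observation is that this must in fact be an equality, which saturates the triangle inequality and forces $q$ onto the ray from $x$ through $y$, collapsing it onto $p$. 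The rest is bookkeeping with the definitions and the two cited lemmas.
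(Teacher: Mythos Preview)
Your proof is correct and follows essentially the same strategy as the paper: set $u=(x-p)/|x-p|$ and verify $d(p,u,\pi_\Su)=|x-p|$, then read off the conclusions from the definitions and Lemma~\ref{Fed4.8.6}. The only cosmetic difference is in the lower bound: the paper argues geometrically via the nested family $B(p+\lambda'u,\lambda')\subset B(x,|x-p|)$ (so the smaller ball can meet $\Su$ only at $p$), whereas you obtain the same conclusion by saturating the triangle inequality; the two arguments are equivalent.
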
 
		\begin{proof}
			Let $Q= \pi_{\Su} (x)$ be the subset of $\Su$ that is closest to $x$. 
			Because $x \in \ax(\Su)$, $Q$ contains at least two points{, one of them being $p$}.
			We write $\lambda = \abs{x-p}$. Since $\Su$ and $\ax(\Su)$ are disjoint, $\lambda>0$, and thus we can define $u = \frac{x-p}{\lambda}$.
			
			Since the interior of the ball $B(x, \lambda)$ does not intersect $\Su$, it in particular does not intersect $\Tan(p,\Su)$ and thus $B(x, \lambda)$ is weakly tangent at $p$ by Remark \ref{rem:AlternativeDef}.
			Let us now consider the nested family 
			$
			B (p + \lambda'u , \lambda')
			$
			of weakly tangent balls at $p$. 
			By definition, $\partial B(x,\lambda) \cap \Su= Q$ and therefore $ B (p + \lambda'u , \lambda') \cap \Su= p$ for $\lambda'< \lambda $. At the same time, Lemma \ref{Lem:PointsBeyondTheMedialAxisNoLongerProjectOnTheSamePoint}
			yields that for $\lambda'> \lambda $, {$p \notin \pi_\Su (p + \lambda'u ) $.} 
			Hence the projection range in direction $u$ equals $ d(p,u,\pi_\Su) = \lambda$ and we obtain $\pi_{\ax,\Su} (p,u)=p+\lambda u =x$ directly from Definition \ref{def:MaximalEmptyTangentBall}
			.  The fact that $\pi_{\ax,\Su}\left(\UBP(\Su)\right) \subseteq \overline{\ax (\Su)}$ is due to Lemma \ref{Fed4.8.6}.
		\end{proof}

		We are now {almost} ready to state our main theorem. Before phrasing the result, we walk the reader through the assumptions and fix the notation on the way. The assumptions are illustrated in Figure~\ref{fig:settings_theorem}.
		\begin{figure}[h!]
			\centering
			\includegraphics[width=\textwidth]{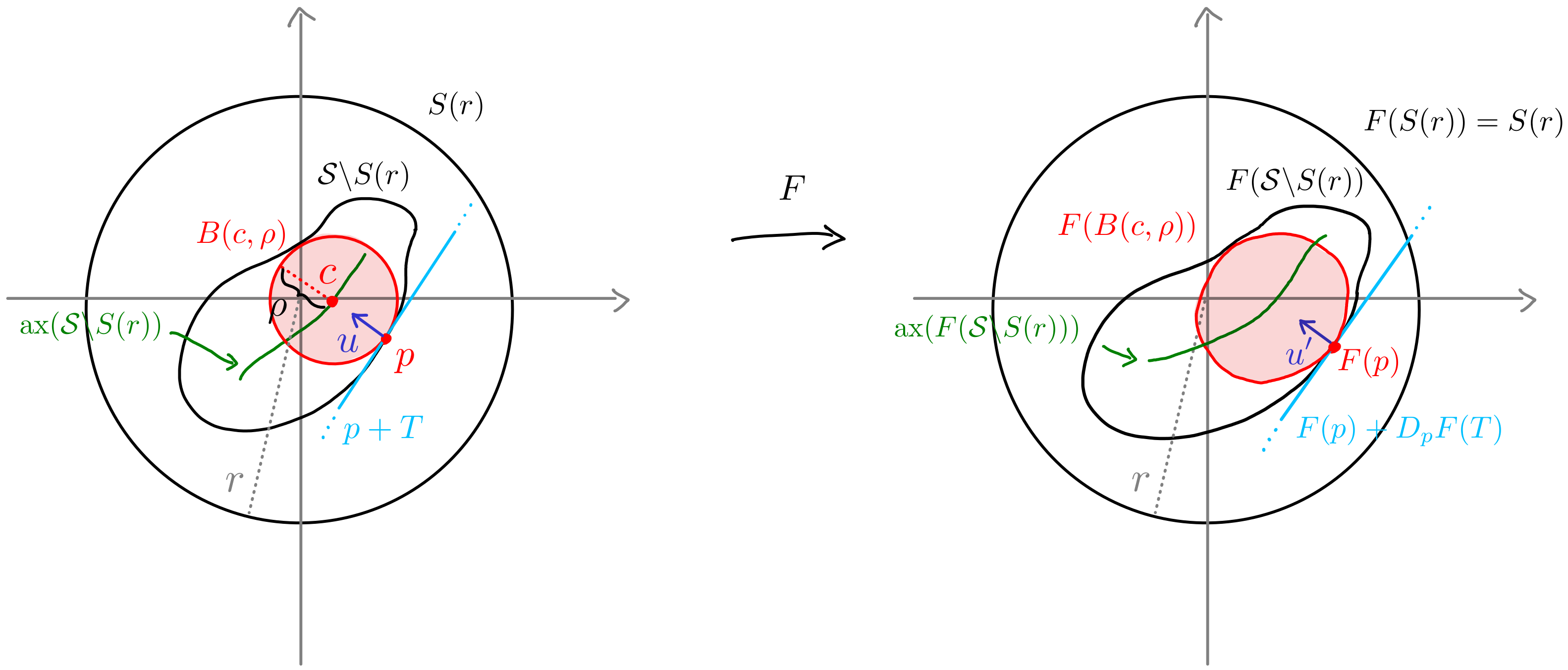}
			\caption{\small The setting of Theorem~\ref{ambientStability}.}
			\label{fig:settings_theorem}
		\end{figure}
		
		\begin{assumption}
			$\phantom{3}$ 
			\begin{itemize}
				\item We assume that the set $\Su$ has a bounding sphere of radius $r$, which we denote by $S(r)$.
				\item We consider a $C^{1}$~diffeomorphism $F:\R^d\to \R^d$ such that the Lipschitz constants of $F$ and $F^{-1}$ are bounded by $L_F$, and the Lipschitz constants of the differentials $DF$ and $DF^{-1}$ are bounded by $L_{DF}$. We call such a diffeomorphism a $C^{1,1}$ diffeomorphism. 
				\item We further assume that the map $F$ leaves the bounding sphere $S(r)$ invariant, that is, $F(S(r)) = S(r)$.
				\item { 
					We pick a point $c \in \ax (\Su)$, 
					a point $p \in \pi_{\Su} (c)$, and write $\rho = |c-p|$. Observe that since $\Su \cap \ax (\Su) = \emptyset$, $\rho$ is positive.
					By Lemma \ref{lem:surjectivityMap}, the ball $B(c,\rho)$ is a maximal empty weakly tangent ball to $\Su$ at $p$. Moreover, we define  $u= \frac{c-p}{\abs{c-p}}$ and note that $u \in {\UBP} (p,\Su)$. 
				} 
				
				\item We denote the tangent hyperplane to the boundary sphere of $B(c,\rho)$ at $p$ by $p+T$. The hyperplane $T$ is the orthocomplement of the vector $u$: $T = u^\perp$. 
				\item 
					We work with the unit vector at $F(p)$ that points inside the image of the ball $B(c,\rho)$ and is orthogonal to the hyperplane $D_p F(T)$. We denote this vector by $u'$.
				\end{itemize}
			\end{assumption}
			
			{ 
				\begin{remark}
					Thanks to Remark \ref{rem:BoundingSphereNoProblem}, the first and the third assumption can be replaced by the following: 
					\begin{itemize} 
						\item An assumption on the density of $\Su$: Every point in $\mathbb{R}^d$ is at most a distance $r$ from some point in $\Su$, and this property is preserved by $F$. 
						{In other words, all $x \in \R^d$ satisfy $|x -\pi_\Su (x)| \leq r$ as well as $|x-\pi _{F(\Su)}(x)| \leq r$.}
						\item We do not have to explicitly refer to the bounding sphere. That is, our bounds hold for any ball $B(c, r/2)$ satisfying $B(c, r/2-L_F r) \cap \Su \neq \emptyset$.
					\end{itemize}   
				\end{remark} 
			}

			The following properties of $u$ and $u'$ play an important rule in the proof of the theorem:
			
			\begin{claim}\label{claim:adjoint_operator}
				\begin{align} 
					u' = \frac{(D_p F^t)^{-1} (u)}{| (D_p F^t)^{-1} (u)|},
					\label{eq:Def_u_prime} 
				\end{align}     
				where $D_pF^t$  is the transpose matrix (or the adjoint operator) of $DF$ at the point $p$, defined by 
				\begin{align} 
					\forall v_1,v_2,\qquad \langle v_1,  D_p F (v_2) \rangle =  \langle D_pF^t (v_1), v_2 \rangle. \nonumber 
				\end{align}  
			\end{claim}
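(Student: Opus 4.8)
The plan is to read off $u'$ from its two defining properties: being orthogonal to $D_pF(T)$ and pointing inside $F(B(c,\rho))$. The first property gives the formula up to sign, the second pins down the sign. First I would use $T=u^\perp$: for every $w\in u^\perp$ we have, by the definition of the adjoint, $0=\langle u', D_pF(w)\rangle=\langle D_pF^t(u'),w\rangle$, so $D_pF^t(u')$ is orthogonal to $u^\perp$, i.e. $D_pF^t(u')=\mu u$ for some scalar $\mu$. Since $F$ is a diffeomorphism, $D_pF$ and hence $D_pF^t$ are invertible, so from $u'\neq 0$ we get $\mu\neq 0$; applying $(D_pF^t)^{-1}$ and normalising (using $\abs{u'}=1$) yields $u'=\pm\,(D_pF^t)^{-1}(u)/\abs{(D_pF^t)^{-1}(u)}$. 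It then only remains to show $\mu>0$.

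For the sign I would bring in the requirement that $u'$ points inside $F(B(c,\rho))$. Since $u=(c-p)/\abs{c-p}$ points from the boundary point $p$ towards the centre $c$, the open segment $\{p+tu : 0<t<\varepsilon\}$ lies in the interior of $B(c,\rho)$ for $\varepsilon$ small enough; applying $F$ and differentiating at $t=0$ shows that $D_pF(u)$ points into $F(B(c,\rho))$ at $F(p)$. The boundary of $F(B(c,\rho))$ near $F(p)$ is a smooth hypersurface with tangent space $D_pF(T)$ at $F(p)$, and $u'$ is by definition the unit normal to $D_pF(T)$ on the side into which $F(B(c,\rho))$ locally lies. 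Hence $D_pF(u)$ and $u'$ lie on the same side of $D_pF(T)$, which forces $\langle u', D_pF(u)\rangle>0$. Finally $\langle u', D_pF(u)\rangle=\langle D_pF^t(u'),u\rangle=\mu\abs{u}^2=\mu$, so $\mu>0$ and the claimed formula follows.

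The main obstacle is this sign determination: one has to translate the ``points inside the ball'' condition at $p$ into a ``points inside the image of the ball'' condition at $F(p)$, which uses that $F$ maps the interior of $B(c,\rho)$ homeomorphically onto the interior of $F(B(c,\rho))$ and that differentiation of the image segment recovers $D_pF(u)$. The rest is routine linear algebra with the adjoint, so in the write-up I would dispatch the segment/differentiation step in one sentence and simply cite invertibility of $D_pF$ for $\mu\neq 0$.
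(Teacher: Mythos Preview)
Your proof is correct and follows essentially the same approach as the paper: both arguments identify $u'$ (up to sign) via orthogonality to $D_pF(u^\perp)$ using the adjoint, and then fix the sign by showing $\langle u', D_pF(u)\rangle>0$ from the ``points into $F(B(c,\rho))$'' condition. The only cosmetic difference is direction: the paper starts from the formula and verifies the two defining properties of $u'$, whereas you start from the defining properties and derive the formula; your segment-and-differentiation justification for why $D_pF(u)$ points inward is a bit more explicit than what the paper writes, but the content is the same.
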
 
			
			\begin{claim}\label{lemma:BoundAngleUUPrime}
				Let $\|D_pF - \operatorname{Id}\| \leq \varepsilon <1$. Then the angle $\angle u, u'$ between the vectors $u$ and $u'$ satisfies 
				\[
				\cos \angle u, u' \geq \sqrt{1 - \varepsilon^2}.
				\]
			\end{claim}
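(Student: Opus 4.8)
The statement to prove is Claim~\ref{lemma:BoundAngleUUPrime}: given $\|D_pF - \operatorname{Id}\| \leq \varepsilon < 1$, the angle between $u$ and $u'$ satisfies $\cos\angle u,u' \geq \sqrt{1-\varepsilon^2}$. By Claim~\ref{claim:adjoint_operator}, $u' = (D_pF^t)^{-1}(u)/|(D_pF^t)^{-1}(u)|$, so writing $A = D_pF^t$ (noting $\|A - \operatorname{Id}\| = \|D_pF - \operatorname{Id}\| \le \varepsilon$ since transposition is isometric for the operator norm), everything reduces to a linear-algebra estimate: for a unit vector $u$ and an operator $A$ with $\|A-\operatorname{Id}\|\le\varepsilon<1$, bound $\langle u, A^{-1}u\rangle / |A^{-1}u|$ from below. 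The plan is to work directly with $w := A^{-1}u$, so that $u = Aw$ and $\langle u, w\rangle = \langle Aw, w\rangle$, and $\cos\angle u,u' = \langle u,w\rangle/(|u|\,|w|) = \langle Aw,w\rangle/|w|$ (using $|u|=1$).

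First I would write $A = \operatorname{Id} + E$ with $\|E\|\le\varepsilon$, so $\langle Aw,w\rangle = |w|^2 + \langle Ew,w\rangle \ge |w|^2 - \varepsilon|w|^2 = (1-\varepsilon)|w|^2 > 0$; in particular the inner product is positive, so the angle is acute and its cosine is what we must estimate. Dividing by $|w|$ gives $\cos\angle u,u' \ge (1-\varepsilon)|w|$. Next I need a lower bound on $|w| = |A^{-1}u|$. Since $\|E\|\le\varepsilon<1$, $A$ is invertible with $\|A^{-1}\|\le 1/(1-\varepsilon)$, but that bounds $|w|$ from above, not below; instead I use $1 = |u| = |Aw| \le \|A\|\,|w| \le (1+\varepsilon)|w|$, hence $|w|\ge 1/(1+\varepsilon)$. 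Combining, $\cos\angle u,u' \ge (1-\varepsilon)/(1+\varepsilon)$.

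The only remaining point is that $(1-\varepsilon)/(1+\varepsilon)$ is a weaker bound than the claimed $\sqrt{1-\varepsilon^2}$, so the crude triangle-inequality estimate above is not quite sharp enough and must be refined. The improvement comes from not discarding the component of $w$ along the ``bad'' directions so wastefully: decompose using $|Aw|^2 = 1$ more carefully. Write $Aw = w + Ew$; then $1 = |w|^2 + 2\langle w, Ew\rangle + |Ew|^2$. Also $\langle Aw, w\rangle = |w|^2 + \langle Ew, w\rangle$, so $\langle Ew,w\rangle = \langle Aw,w\rangle - |w|^2$. The cleanest route is probably to observe that $\cos\angle u,u' = \langle Aw, w\rangle/(|Aw|\,|w|)$ and to bound this using $\|A^{-1}\| \le 1/(1-\varepsilon)$ together with $\|A\|\le 1+\varepsilon$ in a symmetric way — or, better, to use the singular value decomposition of $A$: if $\sigma_{\min},\sigma_{\max}$ are the extreme singular values, then $\|A-\operatorname{Id}\|\le\varepsilon$ forces the eigenvalues of $A^tA = $ (well, of the symmetric part) to lie in a controlled range, and $\cos\angle(Aw,w)$ for the map $A$ is minimized over unit $w$ by a classical $2\times2$ reduction, giving $\cos\angle \ge 2\sqrt{\sigma_{\min}\sigma_{\max}}/(\sigma_{\min}+\sigma_{\max})$. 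The main obstacle is carrying out this last refinement correctly: I would reduce to the two-dimensional span of $w$ and $Aw$, where the problem becomes: minimize $\langle Aw,w\rangle/(|Aw||w|)$, and there the constraint $\|A-\operatorname{Id}\|\le\varepsilon$ (equivalently all singular values within $[1-\varepsilon,1+\varepsilon]$, or more precisely $|A v - v|\le\varepsilon|v|$ for all $v$) should pin the worst case down to exactly $\sqrt{1-\varepsilon^2}$. Concretely, from $|Aw - w| \le \varepsilon|w|$ the vector $Aw$ lies in a ball of radius $\varepsilon|w|$ about $w$, and the smallest possible cosine of the angle between $w$ and a vector in that ball is $\sqrt{1-\varepsilon^2}$ (the tangent-line configuration), which is precisely the claimed bound. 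I expect that once the geometry is phrased as ``$Aw$ lies within distance $\varepsilon|w|$ of $w$'', the inequality $\cos\angle(w, Aw)\ge\sqrt{1-\varepsilon^2}$ is immediate from elementary planar trigonometry, and this is the step to present in full.
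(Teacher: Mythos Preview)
Your proposal is correct, and the final geometric step you land on is exactly the one the paper uses: from $\|A-\operatorname{Id}\|\le\varepsilon$ one has $|Aw-w|\le\varepsilon|w|$, so $Aw$ lies in the ball $B(w,\varepsilon|w|)$, and since $\varepsilon<1$ the origin is outside that ball, forcing $\angle(w,Aw)\le\arcsin\varepsilon$. The detour through the crude $(1-\varepsilon)/(1+\varepsilon)$ bound and the SVD discussion is unnecessary and should be pruned in a write-up, but the destination is right.

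The one difference worth noting is where each argument applies the ball observation. The paper applies it to vectors $v\in u^\perp$ and the map $D_pF$, obtaining $\angle\bigl(u^\perp, D_pF(u^\perp)\bigr)\le\arcsin\varepsilon$; then it invokes the identity $D_pF(u^\perp)=u'^\perp$ from Claim~\ref{claim:adjoint_operator} and the fact that the angle between two lines equals the angle between their orthogonal hyperplanes to conclude $\angle(u,u')\le\arcsin\varepsilon$. You instead set $A=D_pF^t$, $w=A^{-1}u$, so that $u=Aw$ and $u'=w/|w|$, and apply the ball observation directly to the pair $(w,Aw)$, which gives $\angle(u,u')=\angle(Aw,w)\le\arcsin\varepsilon$ in one stroke. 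Your route is shorter and avoids the hyperplane-angle step; the paper's route has the minor advantage of reusing the orthogonality relation \eqref{eq:uPrimeOrthigonalImageT} already established in the proof of Claim~\ref{claim:adjoint_operator}. Both proofs also separately verify $\langle u,u'\rangle>0$ (so that the angle is acute and $\cos=\sqrt{1-\sin^2}$), and your computation $\langle Aw,w\rangle\ge(1-\varepsilon)|w|^2>0$ handles this exactly as the paper does.
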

			
			The proof of both of these properties is an exercise in linear algebra, which we defer to the appendix.
			
			
			\begin{theorem} \label{ambientStability} 
				Under the above assumptions, there exists a maximal empty weakly tangent ball  $B({ c'} ,\rho')$ 
				{ to the set $F(\Su)$,} 
				whose boundary sphere has an internal normal $u'$. 
				In particular, the ball $B({ c'} ,\rho')$ is tangent to the affine hyperplane $F(p)+D_p F (T)$. Its radius $\rho'$ is bounded by
				$
				\rho'  \in \left[\frac{\rho}{ (L_{F})^3  + \rho L_{DF} (L_{F})^{2}},\frac{ (L_{F})^3  \rho}{ 1 - \rho L_{DF} (L_{F})^{2}} \right]. 
				$
				Assume, moreover, that the distortions of both $F$ and $DF$ are bounded, that is, for all $x\in\R^d$,
				\begin{align} 
					|F(x)-x| 
					\leq \varepsilon_1 ,\qquad
					\label{CloseToId}
					\|DF_x - \operatorname{Id}\| 
					\leq \varepsilon_2 <1,
				\end{align}
				and {$r \cdot L_{DF} (L_F)^2 \leq 1/2$}. Define 
				\begin{align} \label{eq:C_L}
					&C_L(r,L_{F}, L_{DF},\varepsilon_1,\varepsilon_2 ) = 
					\nonumber 
					\\ 
					&2 r \sqrt{1 +  (L_{F})^6  \left ( 1+4 r L_{DF} (L_{F})^{2} \right)^2   - 2    (L_{F})^3  \left( 1+4r L_{DF} (L_{F})^{2} \right)   \sqrt{1- (\varepsilon_2)^2} }  +\varepsilon_1 
				\end{align} 
				then the map $\pi_{\ax,\Su}$ satisfies 
				\[|\pi_{\ax,\Su} (p, u)- \pi_{\ax,F(\Su)} (F(p), u' 
				)| 
				\leq C_L(r,L_{F}, L_{DF},\varepsilon_1,\varepsilon_2 ). 
				\] 
				Thus, the Hausdorff distance between the medial axes of $\Su$ and its image $F(\Su)$ is bounded by
				\begin{align}
					d_H (\ax(\Su), \ax ( F(\Su ) )) \leq C_L(r,L_{F}, L_{DF},\varepsilon_1,\varepsilon_2 ).
					\label{eq:BoundHausdorffDistance}
				\end{align} 
			\end{theorem} 
			
			The bound {$|F(x)-x| \leq \varepsilon_1$}  
			really is necessary, because we want {our theorem to accommodate for} 
			rotations and translations, which rotate and translate the medial axis without changing distances and hence have Lipschitz constant 1. We further stress that if the diffeomorphism $F$ is close to the identity, its Lipschitz constant satisfies 
			$L_F\geq 1$, because by assumption $F$ leaves the bounding sphere $S(r)$ invariant, and $L_{DF}$ is close to zero. Under further assumptions, expression~\eqref{eq:C_L} can be bound explicitly, as is done in Remark~\ref{remark20}.

			{

					\begin{proof}
						We first derive the bounds for the radius $\rho'$.
						As the first step, we apply Theorem~\ref{Fed4.19}
						to the boundary sphere 
						$S(c,\rho)$ of the maximal empty weakly tangent ball $B(c,\rho)$. 
						{In particular, we can choose the constant $s$ in Theorem~2.6 
							arbitrarily large, and the constant $t$ arbitrarily close to the reach $\rch(S(c,\rho)) = \rho$, to obtain:}
						\[
						\rch\left(F(S(c,\rho))\right)\geq \frac{1}{ \left(\frac{L_{F}} {\rho} + L_{DF} \right) (L_{F})^{2}}
						= \frac{\rho}{ (L_{F})^3  + \rho L_{DF} (L_{F})^{2}}=:\rho_1 . 
						\]
						This means that no {open} ball of radius $\rho_1$ tangent to the set $F(S(c,\rho))$ actually intersects $F(S(c,\rho))$. In addition, since the set $F(B(c,\rho))$ does not contain any points of $F(\Su)$ in its interior, no ball of radius $\rho_1$ that is tangent to $F(S(c,\rho))$ and whose centre lies inside $F(S(c,\rho))$ contains any point of $F(\Su)$.

						
						{The unit vector $u'\in DF_{p}(T)^\perp$ (defined in \eqref{eq:Def_u_prime}) is defined such that the point $F(p)+\rho_1 u'$ lies inside the distorted ball $F(B(c,\rho))$.} Due to the above observation, the ball $B(F(p)+\rho_1 u', \rho_1)$ is weakly tangent to $F(\Su)$ at $F(p)$ and contains no points of $F(\Su)$ in its interior.

						{Let us now consider the weakly tangent ball $B(F(p) +\rho'' u',  \rho'')$, whose radius $\rho''$ satisfies
							\[ 
							\rho''>\frac{ (L_{F})^3  \rho}{ 1 - \rho L_{DF} (L_{F})^{2}} =:\rho_2.
							\]
							To shorten the notation, we set
							\[
							F(p) +\rho'' u'=:c''.
							\]
							
							To derive a contradiction, we assume that $B(c'',  \rho'')$ is maximal empty.
						}
						This is equivalent to assuming that $\textrm{int} B(c'', \rho'' ) \cap { F} (\Su)= \emptyset$, and thus $B(c'', \rho'')$ is a maximal empty weakly tangent ball to $F(p)$. Similarly to the beginning of the proof, we now apply Theorem \ref{Fed4.19}
						to the map $F^{-1}$ and the boundary sphere $S(c'', \rho'') = \partial B(c'', \rho'')$. 
						As a result, the reach of $F^{-1} (S(c'', \rho''))$ is at least 
						\begin{align}
							\rch\left(F^{-1} (S(c'', \rho''))\right) \geq&\frac{\frac{ (L_{F})^3  \rho}{ 1 - \rho L_{DF} (L_{F})^{2}} }{ (L_{F})^3  +\frac{ (L_{F})^3  \rho}{ 1 - \rho L_{DF} (L_{F})^{2}} L_{DF} (L_{F})^{2}} 
							= \rho.
							\nonumber
						\end{align}
						We conclude that there exists a ball that is tangent to the set $F^{-1} (S(c'', \rho''))$ at $F^{-1}(F(p))=p$, whose radius is larger than $\rho$, and that does not contain any points of $\Su$ in its interior. This contradicts  the fact that the ball $B(c,\rho)$ is maximal empty, and completes the proof of the first part of the statement.

						We now prove the bounds on the distortion of the map $\pi_{\ax, \Su}$.
						{Let $\rho'\in [\rho_1, \rho_2]$ be the radius of the maximal empty weakly tangent ball at $F(p)$ in the direction $u'$, and write $c' := F(p)+\rho' u'$ for its centre. }
						We stress that, as a consequence of {Lemma \ref{Fed4.8.6} (Theorem 4.8 (6) of \cite{Federer})
						}, $c' \in \overline {\ax (F(\Su)) }$, but it is not necessarily true that $c' \in {\ax (F(\Su)) }$. 
						
						{The goal is to estimate the distance between the two centres $c = \pi_{\ax, \Su}(p,u)$ and $c' = \pi_{\ax, \Su}(F(p),u')$.
							Indeed, since $c-p = \rho u$ and $c'-F(p) = \rho' u'$,
							\begin{align*}
								\abs{c-c'} &= \abs{c-p+p-F(p)+F(p)-c'} = \abs{\rho u + p-F(p) - \rho' c'}\\
								&\leq \abs{\rho u - \rho' u'} + \abs{F(p)-p}.
							\end{align*}
							Due to the assumptions of the theorem,
							$\abs{F(p)-p}\leq\varepsilon_1$. 
							Furthermore, thanks to Claim \ref{lemma:BoundAngleUUPrime},
							\begin{align*}
								\abs{\rho u - \rho' u'}^2 = \rho^2+(\rho')^2-2\rho\rho' \cos\angle u, u'\leq \rho^2+(\rho')^2-2\rho\rho' \sqrt{1- (\varepsilon_2)^2}.
							\end{align*}
						}
						
						{Recalling that $\rho'\in [\rho_1, \rho_2]$, we thus obtain}
						\begin{align}
							\abs{\rho u - \rho' u'}\leq&\max \left( \sqrt{\rho^2 + (\rho_1)^2  - 2 \rho \, \rho_1 \cos (\arcsin (\varepsilon_2)) },\sqrt{\rho^2 + (\rho_2)^2  - 2 \rho \, \rho_2 \cos (\arcsin (\varepsilon_2)) } \right)
							\nonumber
							\\
							&= \max \left( \sqrt{\rho^2 + (\rho_1)^2  - 2 \rho \, \rho_1 \sqrt{1-(\varepsilon_2)^2}  },\sqrt{\rho^2 + (\rho_2)^2  - 2 \rho \, \rho_2 \sqrt{1- (\varepsilon_2)^2} } \right).\nonumber
						\end{align}
						
						Hence,
						\begin{align} 
							|c- c'| \leq  \max \left( \sqrt{\rho^2 + (\rho_1)^2  - 2 \rho \, \rho_1 \sqrt{1-(\varepsilon_2)^2}  },\sqrt{\rho^2 + \left(\rho_2\right)^2  - 2 \rho \, \rho_2  \sqrt{1- (\varepsilon_2)^2} } \right) +\varepsilon_1.
							\label{Eq:BoundDistCenters}
						\end{align} 
						
						As the last step, we simplify the expression \eqref{Eq:BoundDistCenters} 
					(at the cost of weakening the bounds). For this, we assume that $\rho L_{DF} (L_F)^2 \leq 1/2$, 
					so that 
					\begin{align} 
						\rho_1
						&= \frac{\rho}{ (L_{F})^3  + \rho L_{DF} (L_{F})^{2}}  \geq \frac{\rho}{(L_F)^3} \left( 1- \rho\frac{L_{DF}}{L_F}  \right) ,
						\label{boundRho1a}
						\\
						\rho_2
						&=\frac{ (L_{F})^3  \rho}{ 1 - \rho L_{DF} (L_{F})^{2}} \leq \rho (L_{F})^3  \left ( 1+2 \rho L_{DF} (L_{F})^{2} \right), 
						\label{boundRho2a}
					\end{align} 
					where we used that, for $x\in [0,1/2]$, $\frac{1}{1+x} \geq 1-x$ and $\frac{1}{1-x} \leq 1+2x$.   
					We note that both $\rho_1$ and $\rho_2$ tend to $\rho$ as $L_F$ tends to $1$ and $L_{DF}$ tends to $0$. We now consider $|\rho_1- \rho|$ and $|\rho_2-\rho|$, and claim that 
					\[|\rho_1- \rho|,|\rho_2- \rho| \leq \rho (L_{F})^3  \left ( 1+2 \rho L_{DF} (L_{F})^{2} \right) -\rho.\] 
					For $|\rho_2- \rho| = \rho_2- \rho$, the claim holds thanks to \eqref{boundRho2a}. To establish this for $|\rho_1- \rho|$ requires a small calculation:  
					\begin{align}
						|\rho_1- \rho| = \rho- \rho_1 &\leq \rho -\frac{\rho}{(L_F)^3} \left( 1- \rho\frac{L_{DF}}{L_F}  \right) 
						\tag{due to \eqref{boundRho1a}}
						\\
						&\leq  \rho (L_{F})^3   \left ( 1+2 \rho L_{DF} (L_{F})^{2} \right) -\rho 
						\tag{assuming the claim holds}
						\\
						2 \rho &\leq \rho (L_{F})^3   \left ( 1+2 \rho L_{DF} (L_{F})^{2} \right) + \frac{\rho}{(L_F)^3} \left( 1- \rho\frac{L_{DF}}{L_F}  \right) 
						\tag{reformulating the previous inequality}
						\\
						2  &\leq  (L_{F})^3+ \frac{1}{(L_F)^3} +    2 \rho L_{DF} (L_{F})^{5}  - \rho\frac{L_{DF}}{(L_F)^4} ,  
						\nonumber
					\end{align} 
					where the final inequality holds because $x^3+x^{-3} \geq 2$, and $2 x^5-x^{-4} \geq 0$, for $x\geq 1$. We now consider the function 
					\begin{align} 
						f(\delta) &= \rho^2 + \rho^2(1+ \delta)^2  - 2 \rho^2 \, (1+ \delta) \sqrt{1-(\varepsilon_2)^2} 
						\nonumber
						\\
						&= \rho^2 \left( \delta^2  + 2  \left(1- \sqrt{1-(\varepsilon_2)^2} \right ) \delta + 2  \left (1- \sqrt{1-(\varepsilon_2)^2} \right) \right) .
						\nonumber
					\end{align}  
					The function $f$ is a second order polynomial in $\delta$ and because all coefficients are positive, the maximum of $f$ on the interval $[-\delta_m, \delta_m]$ is a attained at $\delta_m $, that is,
					\begin{align} 
						\sup_{\delta \in [-\delta_m, \delta_m] }  f(\delta) = f(\delta_m).
						\nonumber
					\end{align} 
					By combining these results, we see that 
					\begin{align} 
						&|c- c'| 
						\nonumber 
						\\
						&\leq   
						\sqrt{f \left (  (L_{F})^3  \left ( 1+2 \rho L_{DF} (L_{F})^{2} \right) -1 \right )} + \varepsilon_1
						\nonumber
						\\
						&= \sqrt{\rho^2 + \left( \rho (L_{F})^3  \left ( 1+2 \rho L_{DF} (L_{F})^{2} \right) \right)^2   - 2 \rho  \left( \rho (L_{F})^3  \left( 1+2 \rho L_{DF} (L_{F})^{2} \right) \right)  \sqrt{1- (\varepsilon_2)^2} } 
						\nonumber \\
						& \phantom{=} +\varepsilon_1
						\nonumber
						\\
						& = \rho \sqrt{1 +  (L_{F})^6  \left ( 1+2 \rho L_{DF} (L_{F})^{2} \right)^2   - 2    (L_{F})^3  \left( 1+2 \rho L_{DF} (L_{F})^{2} \right)   \sqrt{1- (\varepsilon_2)^2} }  +\varepsilon_1.
						\nonumber
					\end{align} 
					Because both $f(\delta)$ and the bound \eqref{boundRho2a} are monotone in $\rho$, {and $\rho$ is bounded by the radius $r$ of the bounding sphere $S(r)$, }
					we conclude that 
					\begin{align} 
						&|c- c'| 
						\nonumber 
						\\
						&\leq  2 r \sqrt{1 +  (L_{F})^6  \left ( 1+4 r L_{DF} (L_{F})^{2} \right)^2   - 2    (L_{F})^3  \left( 1+4r L_{DF} (L_{F})^{2} \right)   \sqrt{1- (\varepsilon_2)^2} }  +\varepsilon_1.
						\label{eq:Hausdorff_Bound_1}
					\end{align} 

					For every point $c$ in $\ax(\Su)$ we have found a point $c'$ in $\overline {\ax(F(\Su))}$ whose distance is bounded by \eqref{eq:Hausdorff_Bound_1}, and therefore the one-sided Hausdorff distance between the two medial axes $\ax(\Su)$ and $\overline {\ax(F(\Su))}$ is bounded by the same quantity. 
					Because the symmetrical formulation of the statement, the same bound holds for the Hausdorff distance. 
				\end{proof}

			}

		\section{Quantifying  \texorpdfstring{$C^{1,1}$}{differentiable with Lipschitz derivative} diffeomorphisms as deviations from identity} \label{sec:BanachHolder}
		
		In this section we reformulate the main result in terms of norms on Banach spaces. This reformulation offers a more theoretical insight, and we believe the reformulated bounds are easier to work with in certain applications. 
		{
			Indeed, in the context of practical numerical computations, a bound on the Lipschitz constant of an operator --- or, at least, a modulus of continuity --- allows to control the condition number.
			This control is particularly useful when we calculate with objects such as the medial axis, 
			whose (numerical) stability is often
			problematic in practice.
			
		}
		As we will see below, for this reformulation we somewhat strengthen our assumptions.
		
		We decompose a diffeomorphism $F$ into the identity map $\mathds{1}_{\R^d}$ on $\R^d$, and a displacement field $\varphi$: $ 
		F= \mathds{1}_{\R^d} + \varphi$.
		For the choice of the displacement field, we restrict ourselves to the vector space $\mathcal{U}$ of all $C^{1,1}$ maps $\varphi$ from $\R^d$ to $\R^d$ whose restriction to the exterior $\R^d \setminus B(r)$ of a certain bounding ball $B(r)$ equals $0$.\footnote{This is more restrictive than assuming that the restriction to the bounding sphere $S(r)$ is $0$, but it simplifies matters in this section.}
		
		A natural norm associated to $\mathcal{U}$ is one that makes it a Banach space. 
		A typical choice, inherited from general Banach spaces of $C^{1,1}$ functions,
		would be for example, for $\varphi \in \mathcal{U}$,
		\begin{equation}\label{eq:normLip11Standard}
			\| \varphi \|_{C^{1,1}} = \max \left( \| \varphi\|_\infty , \,  \| D\varphi \|_\infty, \, \operatorname{Lip}(D\varphi) \, \right).
		\end{equation}
		Here we used the following notation: 
		\begin{itemize} 
			\item $\| \varphi\|_\infty= \sup_{x\in \R^d} | \varphi(x) |$ denotes the sup norm on $x \mapsto  | \varphi(x) |$, 
			where $| \cdot | $ is the Euclidean norm in $\R^d$,
			\item  $\| D \varphi\|_\infty= \sup_{x\in \R^d} \| D\varphi(x) \|$
			denotes the sup norm on  $x \mapsto \|D \varphi(x) \|$ , where $ \|D \varphi(x) \|$ is the operator norm induced by the Euclidean norm on $\R^d$. 
			\item We write $ \operatorname{Lip}(D\varphi)$ for the Lipschitz semi-norm of $D\varphi$. 
			The Lipschitz semi-norms of $\varphi$ and $D\varphi$ are defined as
			\[
			\operatorname{Lip}(\varphi)= \sup_{x,y \in \R^d,\, x \neq y} \frac{|\varphi(y) - \varphi(x)|}{|y-x|},
			\]
			and
			\[
			\operatorname{Lip}(D\varphi)= \sup_{x,y \in \R^d,\,  x \neq y} \frac{\|D\varphi(y) - D\varphi(x)\|}{|y-x|}. 
			\]
		\end{itemize} 
		The norm defined in \eqref{eq:normLip11Standard} makes $\mathcal{U}$ into a Banach space, since every Cauchy sequence in $\mathcal{U}$ has a limit in $\mathcal{U}$. 
		In addition, any function $\varphi\in \mathcal{U}$ satisfies:
		\begin{align}
			\Lip(\varphi) &=  \| D\varphi \|_\infty \label{eq:NormComparizon_1}, \\
			\| D\varphi \|_\infty &\leq r \operatorname{Lip}(D\varphi), \label{eq:NormComparizon_2} \\
			\| \varphi\|_\infty &\leq r \operatorname{Lip}(\varphi) \leq r^2 \operatorname{Lip}(D\varphi), \label{eq:NormComparizon_3}
		\end{align}
		since the  restriction of $\varphi$ to $\R^d \setminus B(r)$ is $0$. 
		This in turn yields that 
		$\operatorname{Lip}(D\varphi)  \leq \| \varphi \|_{C^{1,1}} \leq \max(1, r, r^2)\operatorname{Lip}(D\varphi). $
		Thus, in $\mathcal{U}$, the norm $\varphi \mapsto \operatorname{Lip}(D\varphi)$
		is equivalent to the norm $\varphi\mapsto \| \varphi \|_{C^{1,1}}$.
		
		We can now state 
			slightly less general version of Theorem \ref{ambientStability} in terms of the Banach space $(\mathcal{U},\:  \varphi \mapsto \operatorname{Lip}(D\varphi)\, )$.

			\begin{theorem} \label{ambientStabilityWithLipNorm} 
				Let $\Su\subseteq\R^d$ be bounded by the ball $B(r)$ of radius $r>0$, such that $S(r) = \partial B(r)\subseteq \Su$. Let further $F$ be a $C^{1,1}$ diffeomorphism from $\mathbb{R}^d$ to itself that leaves the set $\R^d \setminus B(r)$ invariant, and define 
				two displacement fields $\varphi, \tilde{\varphi} \in \mathcal{U}$ such that 
				$F= \mathds{1}_{\R^d} + \varphi$ 
				and \[( \mathds{1}_{\R^d} + \tilde{\varphi})\circ ( \mathds{1}_{\R^d} + \varphi) =   \mathds{1}_{\R^d}.\]
				
				Define $\varepsilon = \max \left( \operatorname{Lip}(D\varphi), \operatorname{Lip}(D\tilde{\varphi})   \right)$.
				
				If $r \varepsilon  \leq 1/4$, the Hausdorff distance between the medial axes of the set $\Su$ and its image $F(\Su)$ is bounded by
				$ 
				d_H (\ax(\Su), \ax ( F(\Su ) )) \leq  \left( 1 + \sqrt{50}\right) r^2 \varepsilon +\bigO\left(r^3\varepsilon^2 \right)
				$. 
				In particular,
				$ 
					d_H (\ax(\Su), \ax ( F(\Su ) )) 
					{= \bigO\left( r^2 \varepsilon \right) }
					$. 
				\end{theorem}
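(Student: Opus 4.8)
The plan is to derive Theorem~\ref{ambientStabilityWithLipNorm} as a corollary of Theorem~\ref{ambientStability} by bookkeeping the various Lipschitz constants in terms of the single parameter $\varepsilon = \max(\Lip(D\varphi), \Lip(D\tilde\varphi))$, and then Taylor-expanding the resulting bound \eqref{eq:BoundHausdorffDistance} to first order in $\varepsilon$. First I would translate the hypotheses of Theorem~\ref{ambientStability} into the present setting: since $F = \mathds{1}_{\R^d} + \varphi$ with $\varphi$ supported in $B(r)$ and $\tilde\varphi$ the displacement of $F^{-1}$, the estimates \eqref{eq:NormComparizon_1}--\eqref{eq:NormComparizon_3} give $\|D\varphi\|_\infty \leq r\varepsilon$, $\|\varphi\|_\infty \leq r^2\varepsilon$, and likewise for $\tilde\varphi$. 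From these I read off the constants needed in Theorem~\ref{ambientStability}: $L_{DF} \leq \varepsilon$ (and $L_{DF^{-1}} \leq \varepsilon$, so $L_{DF}$ in the sense of the Assumption is $\leq \varepsilon$); $L_F = \max(\Lip(F), \Lip(F^{-1})) \leq 1 + r\varepsilon$ since $\Lip(F) \leq 1 + \|D\varphi\|_\infty$; $\varepsilon_1 = \max(\|\varphi\|_\infty, \|\tilde\varphi\|_\infty) \leq r^2\varepsilon$; and $\varepsilon_2 = \max(\|D\varphi\|_\infty, \|D\tilde\varphi\|_\infty) \leq r\varepsilon$. I should also check the side conditions of Theorem~\ref{ambientStability}: the hypothesis $r\,L_{DF}(L_F)^2 \leq 1/2$ follows from $r\varepsilon \leq 1/4$ because $(1+r\varepsilon)^2 \leq (5/4)^2 < 2$, and similarly $\rho L_{DF}(L_F)^2 \leq r\varepsilon(1+r\varepsilon)^2 \leq 1/2$ since $\rho \leq r$; also $\varepsilon_2 = r\varepsilon \leq 1/4 < 1$ as required.

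Next I would substitute these bounds into \eqref{eq:BoundHausdorffDistance}. Writing $L_F = 1 + r\varepsilon$, $L_{DF} = \varepsilon$, $\varepsilon_1 = r^2\varepsilon$, $\varepsilon_2 = r\varepsilon$, the square-root term becomes
\[
2r\sqrt{1 + (1+r\varepsilon)^6\bigl(1 + 4r\varepsilon(1+r\varepsilon)^2\bigr)^2 - 2(1+r\varepsilon)^3\bigl(1+4r\varepsilon(1+r\varepsilon)^2\bigr)\sqrt{1-r^2\varepsilon^2}} + r^2\varepsilon.
\]
Set $t = r\varepsilon$, so $t \in [0,1/4]$. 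The quantity under the root is $1 + A^2 - 2A\sqrt{1-t^2}$ where $A = (1+t)^3\bigl(1 + 4t(1+t)^2\bigr) = 1 + 7t + O(t^2)$. Expanding, $1 + A^2 - 2A\sqrt{1-t^2} = (1-A)^2 + 2A(1 - \sqrt{1-t^2}) = (1-A)^2 + A t^2 + O(t^4)$, and $(1-A)^2 = 49t^2 + O(t^3)$, $At^2 = t^2 + O(t^3)$, so the expression equals $50t^2 + O(t^3)$. Taking square roots gives $\sqrt{50}\,t + O(t^2) = \sqrt{50}\,r\varepsilon + O(r^2\varepsilon^2)$, and multiplying by $2r$… here I need to be careful about whether the leading constant is $2\sqrt{50}$ or $\sqrt{50}$: the factor $2r$ in \eqref{eq:BoundHausdorffDistance} came from bounding $\rho \leq r$ together with a further coarsening $1 + 4rL_{DF}(L_F)^2$ (rather than $1+2\rho L_{DF}(L_F)^2$). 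To get the sharper constant $(1+\sqrt{50})$ claimed in the statement, I would instead go back one step in the proof of Theorem~\ref{ambientStability} to the inequality just before \eqref{eq:Hausdorff_Bound_1}, namely $|c-c'| \leq \rho\sqrt{1 + (L_F)^6(1+2\rho L_{DF}(L_F)^2)^2 - 2(L_F)^3(1+2\rho L_{DF}(L_F)^2)\sqrt{1-\varepsilon_2^2}} + \varepsilon_1$, bound $\rho \leq r$ only in the prefactor and inside the $O$-terms, and run the same expansion; this yields $r\sqrt{50}\,\varepsilon + O(r^2\varepsilon^2)$ from the root term and $\varepsilon_1 \leq r^2\varepsilon$ from the translation term, for a total of $(1+\sqrt{50})r^2\varepsilon + O(r^3\varepsilon^2)$.

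Finally I would assemble the pieces: combining the $O(r^2\varepsilon^2)$ bound on the square-root contribution (after factoring $r$) with $\varepsilon_1 \leq r^2\varepsilon$ gives $d_H(\ax(\Su), \ax(F(\Su))) \leq (1+\sqrt{50})r^2\varepsilon + O(r^3\varepsilon^2)$, and the final displayed claim $d_H = O(r^2\varepsilon)$ is immediate since $r^3\varepsilon^2 = (r\varepsilon)\cdot r^2\varepsilon \leq \tfrac14 r^2\varepsilon$ on the range $r\varepsilon \leq 1/4$. The main obstacle, and the part requiring genuine care rather than routine manipulation, is the Taylor expansion of $1 + A^2 - 2A\sqrt{1-t^2}$: one must track that the near-cancellation $1 - A = -7t + O(t^2)$ contributes $49t^2$ and the $2A(1-\sqrt{1-t^2})$ term contributes exactly $t^2$ at leading order, so that the constant under the root is $50$ and not something else, and one must make sure the error terms are genuinely $O(t^3)$ (equivalently $O(r^3\varepsilon^2)$ after the overall factor of $r$) uniformly on $t \in [0,1/4]$, which is where the side condition $r\varepsilon \leq 1/4$ is used to keep all the Taylor remainders controlled. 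A secondary bookkeeping subtlety is making sure the constant $L_{DF}$ as used in Theorem~\ref{ambientStability} (which must dominate the Lipschitz constants of both $DF$ and $DF^{-1}$) is correctly bounded by $\varepsilon$, which is exactly why $\varepsilon$ was defined as the maximum over $\varphi$ and $\tilde\varphi$.
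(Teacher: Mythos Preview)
Your approach is exactly the paper's: translate the Banach-norm inequalities \eqref{eq:NormComparizon_1}--\eqref{eq:NormComparizon_3} into the constants $L_F,L_{DF},\varepsilon_1,\varepsilon_2$ of Theorem~\ref{ambientStability}, verify the side condition $rL_{DF}(L_F)^2\leq 1/2$, and Taylor-expand the radicand in \eqref{eq:BoundHausdorffDistance}. Your expansion $E=(1-A)^2+2A(1-\sqrt{1-t^2})=49t^2+t^2+O(t^3)=50t^2+O(t^3)$ with $t=r\varepsilon$ reproduces the paper's expression \eqref{eq:ExpressionUnderRadical} after substituting the bounds, and the argument for the $\bigO(r^2\varepsilon)$ conclusion is sound.

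One remark on the constant you puzzled over: you are right that the prefactor $2r$ in \eqref{eq:BoundHausdorffDistance} yields $(1+2\sqrt{50})r^2\varepsilon$, not $(1+\sqrt{50})r^2\varepsilon$; the paper's last displayed line silently drops that factor of~$2$. Your proposed fix (returning to the $\rho$-version just before \eqref{eq:Hausdorff_Bound_1} and bounding $\rho\leq r$) does not give $\sqrt{50}$ either: with $2\rho$ in place of $4r$ one has $A=(1+t)^3(1+2t(1+t)^2)=1+5t+O(t^2)$, so $(1-A)^2+t^2=26t^2$ and the resulting constant is $(1+\sqrt{26})$. This is \emph{smaller} than $(1+\sqrt{50})$, so the theorem as stated is still established; only the exact advertised constant is not recovered by a consistent computation.
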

				
				\begin{proof}
					We denote $L_{\varphi} = \Lip(\phi)$. 
					Expressions 
					\eqref{eq:NormComparizon_1}, \eqref{eq:NormComparizon_2} and \eqref{eq:NormComparizon_3} 
					yield:
					\begin{align}\label{eq:RelationEpsilonLipFDF}
						L_{\varphi} &\leq  r \varepsilon ,\qquad
						L_{DF} = \varepsilon ,\qquad
						L_{F} \leq 1 + L_{\varphi} \leq 1  +  r \varepsilon,\qquad
						\varepsilon_1 \leq r^2 \varepsilon ,\qquad
						\varepsilon_2 \leq r \varepsilon.
					\end{align}

					We deduce
					\[
					r \varepsilon  \leq {1/4} \Longrightarrow r \varepsilon (1+ r \varepsilon)^2 \leq 1/2 \Longrightarrow  r  L_{DF} (L_F)^2 \leq 1/2.
					\] 
					Thus, the conditions of Theorem \ref{ambientStability}
					are satisfied. Next, we reformulate the inequality \eqref{eq:BoundHausdorffDistance}
					of Theorem \ref{ambientStability}.
					The expression $E$ under the {square root} {at the right hand side} of {this} inequality 
					is:
					\begin{equation}\label{eq:E_expression}
						E = 1 +  (L_{F})^6  \left ( 1+4 r L_{DF} (L_{F})^{2} \right)^2   - 2    (L_{F})^3  \left( 1+4r L_{DF} (L_{F})^{2} \right)   \sqrt{1- (\varepsilon_2)^2}. 
					\end{equation}
					By replacing $L_{F}$ by $1+ L_{\varphi}$ in $E$, 
					the constants, as well as the
					degree-one terms in $L_{\varphi}$, ${r} L_{DF}$, and $\varepsilon_2$, cancel out. More precisely, 
					\begin{equation}\label{eq:ExpressionUnderRadical}
						E= 16 r^2  L_{DF}^2 + r^2 \varepsilon_2^2 + 24 r  L_{\varphi} L_{DF}  + 9 L_{\varphi}^2  + \bigO( |({r} L_{DF},L_{\varphi},\varepsilon_2 )|^3 ).
					\end{equation}

					Finally, by substituting  inequalities \eqref{eq:RelationEpsilonLipFDF}
					into \eqref{eq:ExpressionUnderRadical}, we obtain
					\[
					E \leq 50 r^2 \varepsilon^2 + \bigO\left(r^3 \varepsilon^{3} \right),
					\]
					and
					\[
					d_H (\ax(\Su), \ax ( F(\Su ) )) \leq \left( 1 + \sqrt{50}\right) r^2 \varepsilon 
					+ {\bigO\left(r^3\varepsilon^2 \right). }
					\]
				\end{proof}
				
				\begin{remark} \label{remark20}
					{We can also bound the expression $E$ from equation~\eqref{eq:E_expression} explicitly. Indeed, if $0\leq x \leq 1/4$ then $-\sqrt{1-x^2} \leq -1+ 4 (4-\sqrt{15})x^2$, and}
						\begin{align}
							E\leq  
							& 50.02  (r \varepsilon)^2 + 329.11  (r \varepsilon)^3 + 
							1126.37  (r \varepsilon)^4 + 2487.66  (r \varepsilon)^5 + 
							3841.65  (r \varepsilon)^6 
							\nonumber
							\\ &
							+ 4276.32  (r \varepsilon)^7 
							+ 3428.06  (r \varepsilon)^8 + 1928  (r \varepsilon)^9 + 
							720  (r \varepsilon)^{10} + 160  (r \varepsilon)^{11} + 
							16  (r \varepsilon)^{12},
							\label{Bound:E}
						\end{align}
						where we still assume that $r \varepsilon  \leq 1/4$.
				\end{remark}

				\begin{remark}\label{rem:Opt} 
					Observe 
					that the bound $ \bigO\left( r^2 \varepsilon \right)$ is consistent with a scaling 
					by factor $\lambda$: $\Su \mapsto \lambda \, \Su$, $F(\cdot) \mapsto \lambda \, F(\cdot/\lambda)$. Under such a scaling, 
					the radius $r$ is multiplied by $\lambda$, while the Lipschitz constant $\operatorname{Lip}(D\varphi)$ --- and therefore $\varepsilon$ --- is divided by $\lambda$. Furthermore, the Hausdorff distance $d_H (\ax(\Su), \ax ( F(\Su ) ))$
					increases by a factor $\lambda$. By considering a diffeomorphism that translates the set $\Su \setminus S(r)$ while keeping the bounding sphere $S(r)$ fixed,
					we see that this bound is 
					asymptotically optimal. 
				\end{remark}

				\section{Conclusion and future work}
				We proved the Hausdorff stability of the medial axis of a closed 
				set without any further assumption on it (as explained in Remark \ref{rem:BoundingSphereNoProblem}, the existence of the bounding sphere serves to formulate the main result in a clean way). 
				
				With regard to applications, our result is the first step towards providing a provably correct image recognition in the context of numerous scientific disciplines, and in particular of astrophysics. The next step is to produce physics-informed models for the medial axis as occurring in astronomical data. 
				
				On the mathematical side, we conclude with a conjecture generalizing our result to compact Riemannian manifolds with bounded curvature. 
				\begin{conjecture}
					Let $\M$ be a compact Riemannian manifold with bounded sectional curvature\footnote{See \cite{Berger} for definitions and a very pedagogical overview of the properties of these manifolds.} and $\Su$ a closed subset of $\M$. Then the medial axis (also called cut locus~\cite{kapovitch2021remarks}) of $\Su$ in $\M$ is Lipschitz stable under diffeomorphisms of $\M$. 
				\end{conjecture}

					\phantomsection
					\addcontentsline{toc}{section}{Bibliography}
					\bibliography{geomrefs}

					\appendix
					\section{Proofs of the Claims}
					\begin{proof}[Proof of Claim~\ref{claim:adjoint_operator}] 
						Since any invertible matrix $A$ satisfies $(A^t)^{-1} = (A^{-1})^t$, one has:
						\begin{align*}
							w \in D_p F (u^\perp) &\iff \langle D_pF^{-1}(w) , u \rangle =0 \\
							&\iff  \langle w , ( D_pF^{-1})^t u \rangle =0 \\
							&\iff  \langle w ,u' \rangle =0 \\
							&\iff w  \in  u'^\perp,
						\end{align*}
						and thus 
						\begin{equation}\label{eq:uPrimeOrthigonalImageT}
							D_p F (u^\perp) =  u'^\perp. 
						\end{equation}
						In other words, we have shown that $u'$ is orthogonal to $D_p F (u^\perp) = D_p F (T)$. 
						
						Because   
						\[
						\langle D_p F (u), (D_p F^{-1})^{t} (u) \rangle 
						= \langle D_p F^{-1}  (D_p F (u)),  u \rangle = \langle  u,  u \rangle > 0, \]
						we deduce that $\langle  D_pF (u), u'\rangle >0$. This is in turn equivalent to $u'$
						pointing towards the interior of $F(B(c,\rho))$.
					\end{proof} 
					
					\begin{proof}[Proof of Claim~\ref{lemma:BoundAngleUUPrime}]
						We first show that $\angle u, u'  < \pi/2$. 
						Indeed, define the vector $w$ as 
						\[w = (D_p F^t)^{-1} (u),
						\]
						that is, the vector satisfying $u = D_p F^t (w)$.
						Then $u' = \frac{w}{\abs{w}}$ (see equation~\eqref{eq:Def_u_prime}
						), and
						\begin{align}
							| w | \langle u, u' \rangle 
							&= \langle u, w \rangle  = \langle D_p F^t w, w \rangle 
							\nonumber 
							\\
							& = \langle   w, D_p F w \rangle
							= \abs{w}^2 + \langle   w, ( D_p F - \operatorname{Id} ) w \rangle 
							\nonumber
							\\
							&\geq \abs{w}^2 - \abs{w}^2 \|  D_p F - \operatorname{Id} \|
							\nonumber
							\\
							&> 0. 
							\tag{because, by assumption, $\|DF_p - \operatorname{Id}\| < 1$} 
						\end{align}
						Thus, $\langle u, u' \rangle > 0$, and therefore $\angle u, u' < \pi/2$.
						
						\begin{figure}[h!]
							\centering
							\includegraphics[width=.40\textwidth]{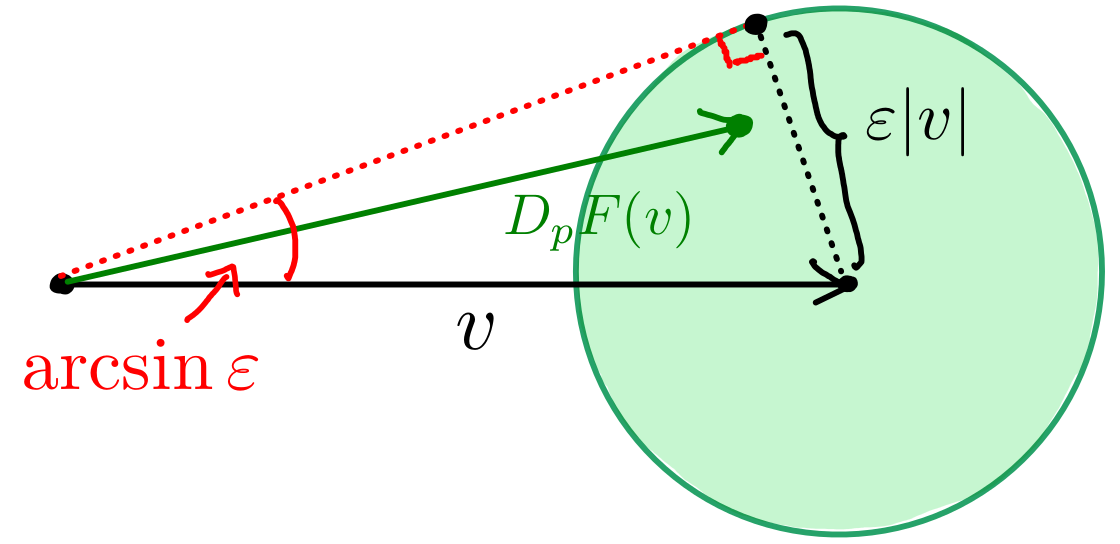}
							\caption{Since $ \| v - D_p F (v) \|\leq \varepsilon\abs{v}$, the vector $D_p F (v)$ lies in the green ball $B(v, \varepsilon\abs{v})$. Since $\varepsilon<1$, the angle between $v$ and $D_p F (v)$ is upper-bounded by $\arcsin\varepsilon<\pi/2$.
							}
							\label{fig:TrivialAngleBound}
						\end{figure}
						
						Furthermore, consider a vector $v\in u^\perp$. 
						Since $ \| v - D_p F (v) \|  \leq \|D_p F - \operatorname{Id}\| |v| \leq \varepsilon |v|$, the angle between $v$ and $D_p F (v)$ is upper-bounded by $\arcsin\varepsilon<\pi/2$, as illustrated in Figure~\ref{fig:TrivialAngleBound}. This yields a bound on the angle between the tangent spaces $u^\perp$ and $D_p F ( u^\perp )$:
						\begin{equation} \label{eq:AngleUperpAndItsImage}
							\sin  \angle  u^\perp, D_p F ( u^\perp ) 
							= \sin  \sup_{v \in u^\perp, w \in  D_p F ( u^\perp ) } \angle v, w \: \leq  \varepsilon.
						\end{equation}
						Using \eqref{eq:uPrimeOrthigonalImageT} and \eqref{eq:AngleUperpAndItsImage} we deduce that:
						\[
						\sin  \angle u, u' = \sin  \angle  u^\perp, u'^\perp \leq \varepsilon.
						\]
						Finally, since  $\angle u, u'  < \pi/2$, $\cos \angle u, u'\geq \sqrt{1-\varepsilon^2}$. This concludes the proof.
					\end{proof}
					
					\section{The calculation for \texorpdfstring{\eqref{Bound:E}}{the bound on $E$}}
					
					As mentioned, using that $-\sqrt{1-x^2} \leq -1+ 4 (4-\sqrt{15})x^2$ if $0\leq x \leq 1/4$, we find that $E$ is upper bounded by 
					\begin{align}
						E\leq & 1 +  (1+r \varepsilon)^6  \left ( 1+4 r \varepsilon (1+r \varepsilon)^{2} \right)^2   - 2    (1+r \varepsilon)^3  \left( 1+4r \varepsilon  (1+r \varepsilon)^{2} \right)   
						\nonumber 
						\\&+  2    (1+r \varepsilon)^3  \left( 1+4r \varepsilon (1+r \varepsilon)^{2} \right)  4 (4-\sqrt{15}) (r \varepsilon)^{2}
						\nonumber
						\\ =& 
						49 (r \varepsilon)^2
						+322 (r \varepsilon)^3
						+1103 (r \varepsilon) ^4
						+2446 (r \varepsilon) ^5
						+3801 (r \varepsilon) ^6
						+4256 (r \varepsilon) ^7
						\nonumber \\ &
						+3424 (r \varepsilon)^8
						+1928 (r \varepsilon)^9
						+720 (r \varepsilon)^{10}
						+160 (r \varepsilon)^{11}
						+16 (r \varepsilon)^{12}
						\nonumber \\ &
						+
						\left(4-\sqrt{15}\right)  \Big(  8 (r \varepsilon) ^2 
						+56  (r \varepsilon)^3 
						+184 (r \varepsilon) ^4
						+328  (r \varepsilon) ^5
						+320 (r \varepsilon) ^6
						+160  (r \varepsilon) ^7
						\nonumber \\ & \phantom{\left(4-\sqrt{15}\right)} 
						+32  (r \varepsilon)^8\Big) 
						\nonumber
						\\ =& 50.02  (r \varepsilon)^2 + 329.11  (r \varepsilon)^3 + 
						1126.37  (r \varepsilon)^4 + 2487.66  (r \varepsilon)^5 + 
						3841.65  (r \varepsilon)^6 
						\nonumber
						\\ &
						+ 4276.32  (r \varepsilon)^7 
						+ 3428.06  (r \varepsilon)^8 + 1928  (r \varepsilon)^9 + 
						720  (r \varepsilon)^{10} + 160  (r \varepsilon)^{11} + 
						16  (r \varepsilon)^{12},
						\tag{\ref{Bound:E}}
					\end{align}
					where we still assume that $r \varepsilon  \leq 1/4$.

					\section{Federer's tubular neighbourhood lemma}
					We recall:
					\begin{lemma}[Federer's tubular neighbourhood lemma, Theorem 4.8 (12) of \cite{Federer}]\label{Fed4.8.12}
						Let $p \in \Su$ and $\lfs (p)> 0$. The generalized normal space to $\Su$ at $p$ is characterized by the following property: For any 
						$\rho\in\R$ satisfying $0< \rho <\lfs (p) $,
						\begin{align}
							\Nor ( p,\Su) = \{ \lambda v\in \R^d \mid \lambda\geq 0 , | v| = \rho ,  \pi_\Su (p+v) =\{p\}\} .
							\nonumber
						\end{align}
						In particular, $\Nor ( p,\Su)$ is a convex cone.
						The generalized tangent space $\Tan(p,\Su)$ is the convex cone dual to $\Nor(p,\Su)$. 
					\end{lemma}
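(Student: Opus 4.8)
Here the statement splits into two inclusions — an easy one and one that carries all the weight — after which the ``in particular'' clauses are elementary convex-cone duality.

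The inclusion ``$\supseteq$'' (every $\lambda v$ with $\lambda\ge 0$, $|v|=\rho$, $\pi_{\Su}(p+v)=\{p\}$ is normal) is the easy one, and it is essentially Lemma~\ref{lem:back_projection_subset_normal_cone}: the hypothesis $\pi_{\Su}(p+v)=\{p\}$ says exactly that the open ball $B(p+v,\rho)$ is disjoint from $\Su$, so for any $u\in\Tan(p,\Su)$ and any sequence $q_n\in\Su\setminus\{p\}$ with $q_n\to p$ and $(q_n-p)/|q_n-p|\to u/|u|$, the inequality $|q_n-p-v|^2\ge\rho^2$ divided by $|q_n-p|$ passes to the limit and yields $\langle u,v\rangle\le 0$, hence $\langle u,\lambda v\rangle\le 0$; the element $0$ is trivial. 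So the real work is in ``$\subseteq$''.

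For ``$\subseteq$'' I would fix a unit vector $v\in\Nor(p,\Su)$ and set $\mu:=d(p,v,\pi_{\Su})$ (Definition~\ref{def:projection_range}). It suffices to show $\mu\ge\lfs(p)$, since then $\rho<\lfs(p)\le\mu$ forces $\pi_{\Su}(p+\rho v)=\{p\}$ by down-closedness of the set of admissible scalars (Lemma~\ref{Lem:PointsBeyondTheMedialAxisNoLongerProjectOnTheSamePoint}). In fact it even suffices to prove $\mu>0$: if $\mu\in(0,\infty)$ then Lemma~\ref{Fed4.8.6} puts $p+\mu v$ on $\overline{\ax(\Su)}$, so $\mu=|p+\mu v-p|\ge d\bigl(p,\overline{\ax(\Su)}\bigr)=\lfs(p)$, and if $\mu=\infty$ there is nothing to show. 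Thus everything reduces to the single implication: $v\in\Nor(p,\Su)$ a unit vector and $\lfs(p)>0$ imply $d(p,v,\pi_{\Su})>0$. The plan for this implication, which I expect to be the hard part, is a limiting argument. Assume for contradiction $\mu=0$, so $\pi_{\Su}(p+\tfrac1n v)\ne\{p\}$ for all large $n$; since $p+\tfrac1n v$ lies in $B(p,\lfs(p))$, which misses $\ax(\Su)$, this projection is a single point $q_n\ne p$, and $|q_n-p|\le 2/n\to 0$. Expanding $|q_n-p-\tfrac1n v|^2<\tfrac1{n^2}$ gives $\langle q_n-p,v\rangle>0$; since every subsequential limit of $(q_n-p)/|q_n-p|$ lies in $\Tan(p,\Su)$ while $v\in\Nor(p,\Su)$, this inner product is actually $o(|q_n-p|)$, which upgrades $|q_n-p|\le 2/n$ to $|q_n-p|=o(1/n)$ and forces $\hat e_n:=\bigl((p+\tfrac1n v)-q_n\bigr)/\bigl|(p+\tfrac1n v)-q_n\bigr|$ to converge to $v$. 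Now bootstrap: from $\pi_{\Su}(q_n+r_n\hat e_n)=\{q_n\}$ with $r_n=|(p+\tfrac1n v)-q_n|$, the range $d(q_n,\hat e_n,\pi_{\Su})$ is positive, hence by Lemma~\ref{Fed4.8.6} at least $d\bigl(q_n,\overline{\ax(\Su)}\bigr)=\lfs(q_n)\ge\lfs(p)-|q_n-p|\to\lfs(p)$; so for large $n$ the open ball $B\bigl(q_n+\tfrac14\lfs(p)\,\hat e_n,\ \tfrac14\lfs(p)\bigr)$ is disjoint from $\Su$. Passing to the limit (centres converge, radius fixed), $B\bigl(p+\tfrac14\lfs(p)\,v,\ \tfrac14\lfs(p)\bigr)$ is disjoint from $\Su$, whence $\pi_{\Su}\bigl(p+\tfrac14\lfs(p)\,v\bigr)=\{p\}$ and $\mu\ge\tfrac14\lfs(p)>0$, a contradiction. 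The delicate point is exactly this manufacture of a macroscopic empty tangent ball at $p$ from the hypothetical microscopic ones along $\tfrac1n v$; this is the only place that genuinely uses $\lfs(p)>0$, and it must be, since the graph of $x\mapsto|x|^{3/2}$ at the origin (where $\lfs=0$) has a normal direction of zero projection range.

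Finally, the ``in particular'' clauses. Convexity of $\Nor(p,\Su)$ is immediate, it being by definition the intersection of the half-spaces $\{v:\langle v,u\rangle\le 0\}$ over $u\in\Tan(p,\Su)$. For the duality, $\Tan(p,\Su)\subseteq\Nor(p,\Su)^{\circ}$ is free from the definition of $\Nor$, while the reverse inclusion $\Nor(p,\Su)^{\circ}\subseteq\Tan(p,\Su)$ follows by the same empty-ball mechanism as above (a unit $u\notin\Tan(p,\Su)$, a closed cone, leaves a cone of directions that $\Su$ avoids near $p$, and the resulting ``gap'' would drag $\overline{\ax(\Su)}$ down to $p$ unless some $v\in\Nor(p,\Su)$ has $\langle u,v\rangle>0$, i.e.\ $u\notin\Nor(p,\Su)^{\circ}$); the bipolar theorem then identifies $\Tan(p,\Su)$ with the closed convex dual cone of $\Nor(p,\Su)$.
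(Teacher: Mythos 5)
Your proof of the displayed set identity is correct and essentially self-contained; note that the paper itself gives no proof of this lemma (it is quoted from Federer's Theorem 4.8(12)), so there is no in-paper argument to compare routes with. The inclusion $\supseteq$, the reduction of $\subseteq$ via Lemmas~\ref{Fed4.8.6} and~\ref{Lem:PointsBeyondTheMedialAxisNoLongerProjectOnTheSamePoint} to the single implication ``unit normal and $\lfs(p)>0$ imply positive projection range'', and the bootstrapping contradiction all check out. Two small remarks: at the very end you still need the one-line observation that $p+\tfrac14\lfs(p)v$ lies at distance $\tfrac14\lfs(p)<\lfs(p)$ from $p$, hence off $\ax(\Su)$, so its projection is a singleton and therefore equals $\{p\}$ (the same mechanism you used for $q_n$); and when $\Nor(p,\Su)=\{0\}$ (e.g.\ $p$ interior to $\Su$) the right-hand set is empty, a degenerate case of the statement as quoted rather than of your argument.

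The genuine gap is the last clause, that $\Tan(p,\Su)$ is the convex cone dual to $\Nor(p,\Su)$, which you dispatch in one parenthetical sentence. The hard inclusion is $\Nor(p,\Su)^{\circ}\subseteq\Tan(p,\Su)$, and the mechanism you gesture at does not suffice as stated. Take a unit $u\notin\Tan(p,\Su)$, so that $\Su$ avoids a truncated open cone of half-angle $\delta$ around $u$ near $p$, and project $x_t=p+tu$ (off $\ax(\Su)$ for $t<\lfs(p)$) to $q_t=\pi_\Su(x_t)$, with $r_t=|x_t-q_t|\in[t\sin\delta,\,t]$ and $w_t=(x_t-q_t)/r_t$. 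The constraints you invoke ($q_t$ outside the cone, $r_t\le t$, $|q_t-p|\le 2t$) permit $\langle u,w_t\rangle$ to be as small as roughly $-\cos 2\delta$, which is negative for small $\delta$; so ``the gap forces some $v\in\Nor(p,\Su)$ with $\langle u,v\rangle>0$'' is exactly the point left unproved. It can be repaired with the tool you already built: since $d(q_t,w_t,\pi_\Su)\ge\lfs(q_t)\ge\lfs(p)-2t$, the open ball $B(q_t+Rw_t,R)$ with $R=\lfs(p)/2$ misses $\Su$ for small $t$, in particular misses $p$; expanding $|p-q_t-Rw_t|^2\ge R^2$ gives $\langle p-q_t,w_t\rangle\le |p-q_t|^2/(2R)$, and since $p-q_t=r_tw_t-tu$ this reads $\langle u,w_t\rangle\ge \bigl(r_t-|p-q_t|^2/(2R)\bigr)/t\ge\sin\delta-2t/R$. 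Letting $t\to0$ along a subsequence with $w_t\to w$, the empty balls converge to an empty ball tangent at $p$, so $w\in\Nor(p,\Su)$ by your $\supseteq$ argument, while $\langle u,w\rangle\ge\sin\delta>0$, contradicting $u\in\Nor(p,\Su)^{\circ}$. Finally, once both inclusions $\Tan(p,\Su)\subseteq\Nor(p,\Su)^{\circ}$ and $\Nor(p,\Su)^{\circ}\subseteq\Tan(p,\Su)$ are in hand the bipolar theorem is superfluous, and it cannot substitute for the missing inclusion, since convexity and closedness of $\Tan(p,\Su)$ are precisely what is at stake.
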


				\end{document}